\newtheorem{theorem}{Theorem}
\newtheorem{corollary}{Corollary}
\newtheorem{claim}{Claim}
\begin{document}
\title{ICE Buckets: Improved Counter Estimation for Network Measurement}
\author{Gil Einziger, Benny Fellman, Roy Friedman and Yaron Kassner
\thanks{Gil Einziger is with the department of Electrical Engineering, Politecnico di Torino, Italy. $<$gilga1983@gmail.com$>$.}%
\thanks{Benny Fellman is with the department of Electrical Engineering, Technion, Israel. $<$benny.fellman@gmail.com$>$.}%
\thanks{Roy Friedman and Yaron Kassner are with the department of computer science, Technion, Israel. $<$roy@cs.technion.ac.il$>$, $<$kassnery@gmail.com$>$ }
}

\date{}

\maketitle

\begin{abstract}

Measurement capabilities are essential for a variety of network applications, such as load balancing, routing, fairness and intrusion detection.
These capabilities require large counter arrays in order to monitor the traffic of all network flows. While commodity SRAM memories are capable of operating at line speed, they are too small to accommodate large counter arrays. Previous works suggested estimators, which trade precision for reduced space. However, in order to accurately estimate the largest counter, these methods compromise the accuracy of the smaller counters.
In this work, we present a closed form representation of the optimal estimation function.
We then introduce Independent Counter Estimation Buckets (ICE-Buckets), a novel algorithm that improves estimation accuracy
for all counters. This is achieved by separating the flows to buckets and configuring the optimal estimation function according
to each bucket's counter scale. We prove a tighter upper bound on the relative error and demonstrate an accuracy improvement of up to 57 times on real Internet packet traces.
\end{abstract}

\section{Introduction}
\subsection{Background}
\emph{Counter arrays} are essential in network measurements and accounting. Typically, measurement applications track several million flows~\cite{CounterArray1,CounterArray2}, and their counters are updated with the arrival of every packet. These capabilities are an important enabling factor for networking algorithms in many fields such as load balancing, routing, fairness, network caching and intrusion detection ~\cite{ApproximateFairness,IntrusionDetection,IntrusionDetection2,LoadBalancing,TinyLFU}.
Counter arrays are also used in popular approximate counting sketches such as \emph{multi stage filters}~\cite{CUSketch}
and \emph{count min sketch}~\cite{CountMinSketch},
as well as in network monitoring architectures~\cite{Ciuffoletti06architectureof,paxson1998architecture,BetterNetflow}.
Such architectures are used to collect and analyze statistics from many networking devices~\cite{HadoopArchitecture}.

Implementation of counter arrays is particularly challenging due to the requirement to operate at line speed.
Although commodity SRAM memories are fast enough for this task, they do not meet the space requirements of modern counter arrays.
Implementing a counter array entirely in SRAM is therefore very expensive~\cite{CounterArchitecture}.

Counter estimation algorithms use shorter counters, e.g., 12-bits instead of 32-bits, at the cost of a small error. 
Upon packet arrival, a counter is only incremented with a certain probability that depends on its current value.
In order to keep the relative error uniform, small values are incremented with high probability and large ones with low probability.
An \emph{estimation function} is used in order to determine these probabilities and estimate the true value of a counter.
Estimation functions can be \emph{scaled} to achieve higher counting capacity at the cost of a larger estimation error.

\subsection{Contributions}
\begin{figure}[t]
\subfigure[Twelve flows are estimated with a classic counter estimation array. Error for all counters is affected by the largest flow in the array (D).]{
\includegraphics[width=\columnwidth]
{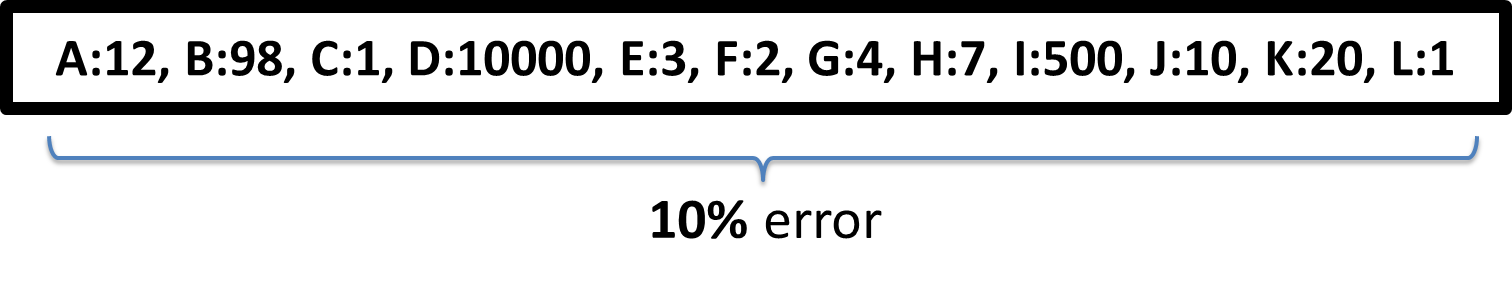}
\label{previousWork}
}
\subfigure[The same twelve flows are estimated with ICE-Buckets. In this example the flows are separated into four buckets. Error for each counter is affected only by the largest flow in its bucket. ]{
\includegraphics[width=\columnwidth]
{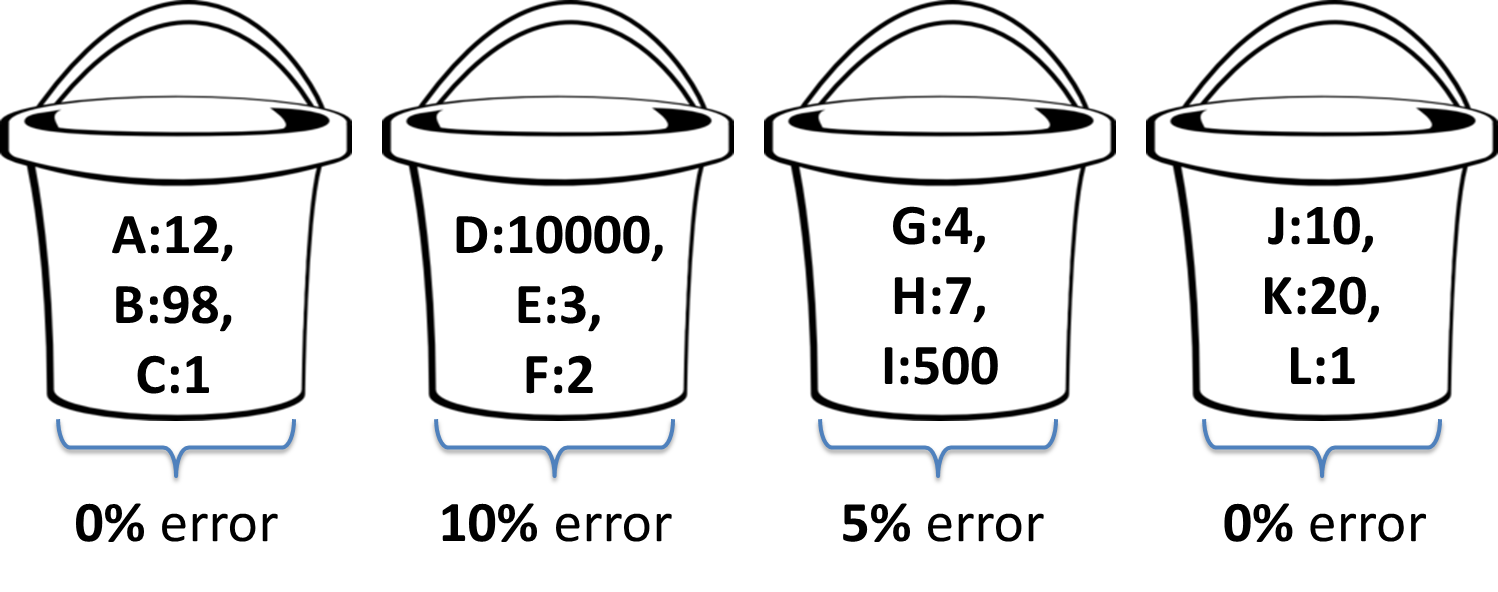}\label{ourwork}
}
\caption{An overview of ICE-Buckets vs. previous counter estimation approaches.}
\label{ICE vs previous}
\end{figure}

In this work we present \emph{Independent Counter Estimation Buckets (ICE-Buckets)}, a novel counter estimation technique that reduces the overall error by
efficiently utilizing multiple counter scales.

The main principle of ICE-Buckets is illustrated in Figure~\ref{ICE vs previous}.
In this example, the largest counter (D) can only be estimated with a large scale and a relative error of $10\%$. In the traditional approach, this error applies to all counters, as illustrated in Figure~\ref{previousWork}. Figure~\ref{ourwork} shows what happens when the array is partitioned into independent buckets. Counter D is still estimated with an error of $10\%$, but in this case the error applies only to counters within the same bucket. The other buckets are able to use smaller scales and enjoy lower relative error. Consequently, the overall error is reduced.

ICE-Buckets makes use of the optimal estimation function that was previously known only in recursive form. 
We present an explicit representation and provide an extended analysis for this function. 
We also present a rigorous mathematical analysis of ICE-Buckets that includes a very attractive upper bound for the overall relative error and a Chebyshev analysis to bound the probability that the error is above a given threshold.
We show that for traffic characteristics of real workloads this upper bound is up to 14 times smaller than that of previous works. 
Moreover, we show that the maximum relative error of ICE-Buckets is optimal.

We provide a lower and upper bounds for the space required to obtain a given counting capacity and error bound. 
We also analyze the error as a function of the maximal counting capacity and the number of estimation symbols.
This analysis provides us with the mathematical tools to configure ICE-Buckets parameters in an optimal manner.

We further show how to perform decrements and downscaling with ICE-buckets.
Yet, their complicated mathematical analysis is left for future work.

Additionally, we extensively evaluate ICE-Buckets with five real Internet packet traces and demonstrate an accuracy improvement of up to 57 times.
Finally, we show that ICE-Buckets can avoid global scale adjustments and still maintain similar accuracy. This configuration is more attractive for practical implementations.

In summary, we are the first to present a closed form explicit representation of an optimal estimation function.
This enables us to extensively study the various aspects of this function using rigorous mathematical analysis, including the relation between its relative error, memory complexity, estimation symbol range, and even bound the probability of the actual error exceeding a certain value.
We then propose the ICE-buckets technique, which divides counters into buckets, where each bucket is maintained with its own scale parameter, thereby greatly reducing the relative error.
ICE-Buckets is also analyzed, and we show a methodological way of configuring its parameters.
Finally, we simulate ICE-Buckets using 5 real-world traces and compare it to state of the art approaches, demonstrating its substantial benefits.

\subsection{Related Work}
While all counter arrays are required to monitor traffic at line speed, their implementations differ in the availability of monitored data. \emph{Offline} counter arrays can take as much as several hours to read from, while \emph{online} counter arrays can be read at line speed.
Naturally, offline counter arrays are used for high level tasks such as data analysis and identifying performance bottlenecks.
On the other hand, online counter arrays are used to answer low level queries such as what priority to give a certain flow, how much bandwidth it requires and where to route its packets. 

Hybrid DRAM/SRAM counter arrays~\cite{CounterArray2,CounterArray1} store only the least significant bits of each counter in SRAM and the rest of the counter in (slower) DRAM.
In \emph{CounterBraids}~\cite{CounterBraids}, counters are compressed in order to fit inside SRAM, but the decoding process is slow.
Alternatively, \emph{Randomized Counter Sharing (RCS)}~\cite{RCS} reduces the overhead required to maintain a flow to counter association.
In that solution, each flow is randomly associated with a large number of counters and on each packet arrival a random counter is incremented. Statistical methods are then used in order to decode flow values.
\emph{Counter Tree}~\cite{countertree} further reduces the memory requirements of RCS by introducing the concept of \emph{virtual counters}, each constructed from multiple physical counters organized in a tree structure such that large virtual counters span a path crossing multiple levels of the tree.
Here, each flow is associated to multiple virtual counters using a plurality of hash functions.
Hence, virtual counters share physical counters while flows share virtual counters and the virtual counters have variable size.
Alas, CounterBraids, RCS, as well as Counter Tree, are all offline due to their long complex decode time, while hybrid SRAM/DRAM architectures are offline since reading  requires accessing DRAM.
Interestingly, estimators like the one suggested in this paper, can further improve the space efficiency of RCS and Counter Tree at the expense of precision.

Brick~\cite{Brick} is an online counter array that encodes variable length counters. Brick can hold more counters as the average counter is shorter than the largest one. Unfortunately, the counting capacity is limited and the encoding becomes less efficient as the average counter value increases. Alternatively, sampling techniques~\cite{Sample1,BetterNetflow} and heavy hitters algorithms~\cite{HeavyHitters,SpaceSavings,BatchDecrement,Infocom2016} are able to monitor large flows. However, since they do not monitor all the flows, this type of solution is not always suitable.

Another popular approach for efficient flow statistics representation is shared counters.
In these schemes, there is no longer a guaranteed one to one correspondence between a counter and a flow.
Rather, some indirect hashing based mapping is maintained.
This enables eliminating maintaining flow identifiers and the respective associations.
Prominent example of these include \emph{count min sketch} (CMS)~\cite{CountMinSketch}, \emph{multi-stage filters}~\cite{HuffmanBF}, \emph{spectral Bloom filters} (SBF)~\cite{SpectralBloom} and their variants, as well as TinyTable~\cite{TinyTable}.
Shared counter techniques have a potential to work well with estimators, which can reduce the size of every shared counter.

Estimators are able to represent large values with small symbols at the price of a small error.  They can therefore be used to implement online counter arrays. This idea was first introduced by \emph{Approximate Counting}~\cite{ApproximateCounting} and was recently adapted to networking as \emph{Small Active Counters (SAC)}~\cite{SAC}. It was later improved by \emph{DISCO}~\cite{DISCOJournal} in order to provide better accuracy and support variable sized increments.

\cite{ANLSUpscaling} introduced a way to gradually increase the relative error as the counters grow.
\emph{CEDAR}~\cite{CEDAR} proved that their estimation function is optimal. \emph{CASE}~\cite{CASE} extended our analysis of the optimal estimation function to also include variable increments. They showed that the large flows can be tracked by a cache to improve accuracy. CASE can be deployed with any estimation technique including the one presented in this paper. 

In general, estimators require more space than sampling techniques, they provide accurate estimation for both small and large flows, and compared to Brick they enjoy significantly higher counting capacity at the price of a small relative error.

\subsection{Paper Organization}

The optimal estimation function is presented and analyzed in Section~\ref{optimal},
followed by the presentation and analysis of ICE-Buckets in Section~\ref{ICE-Buckets}. Section~\ref{Simulation Results} describes simulation results with real Internet packet traces. We conclude our work in Section~\ref{Conclusion}.
\section{Optimal Estimation Function}
\label{optimal}
\begin{table}[htp]
	{		
		\footnotesize
		\begin{center}
			\begin{tabular}{|c|c|}
				
				\hline
				Notation & Description\tabularnewline
				\hline
				\hline
				$M$ & Maximum possible number of packets.\tabularnewline
				\hline
				$L$ & Number of different possible symbols - a power of two.\tabularnewline
				\hline
				$\varepsilon_{max}$ & Maximal relative error (for a single counter). \tabularnewline
				\hline
				$\delta_{max}$ & Maximal coefficient of variation (CV) of the hitting time. \tabularnewline
					\hline
			
				$A_\epsilon(l)$ & The optimal estimation function with a scale of $\varepsilon = \varepsilon_{max}$
				\tabularnewline
				\hline
				
			\end{tabular}
			\normalsize
		\end{center}
		\caption{Notations}
		\label{notationdescription1}
	}
\end{table}
\subsection{Technical Background}
Consider the problem of counting up to $M$ packets, with a counter of only $\log_2L$ bits, where $\log_2L < \log_2\left(M+1\right)$ bits. We rely on an \emph{estimation function} $A:\{0,...,L-1\} \to \left[ 0,M \right]$, which accepts a symbol $l$ as input and returns an \emph{estimation value} for that symbol. $M$ is the required \emph{counting capacity} of the estimation function. For easy reference, the notations used in this section are summarized in Table~\ref{notationdescription1}

First, the symbol $l$ is initialized to zero. Upon arrival of a packet, we increment $l$ with probability $\frac{1}{D\left(l\right)}$, where $D\left( l\right) = A\left( l+1\right)-A\left( l\right)$. It is easy to verify that the expected estimation value of $l$ grows by one with each packet. Thus, the counter estimation is unbiased.

For example, for the estimation function $A(l)=2^l$, if at a certain point in time a symbol $l=3$ is used, its estimation value is $A(3)=8$. If another packet arrives at the flow, we increment the symbol with probability ${1 \over D(3)}={1 \over A(4)-A(3)}={1 \over 8}$. The estimation value is expected to change from $8$ to $16$ after 8 packet arrivals.

\subsection{Our Estimation Function}
We propose the following estimation function
\begin{equation} \label{optimal estimation function}
A_\epsilon(l)= \frac{(1+2\epsilon^2)^l-1}{2\epsilon^2}(1+\epsilon^2),
\end{equation}
where $\epsilon$ is a parameter of the algorithm. The motivation and benefits of this estimation function are discussed in Subsection~\ref{sec:analysis}.

\subsection{Upscale}
Upscale is a way to dynamically adjust the counter scale to the actual workload \cite{ANLSUpscaling}. It is useful in case the counting capacity~$M$ is unknown. We begin with a small $\epsilon$ that gives a counting capacity of $A_\epsilon(L-1)$ and dynamically increase it when necessary.
That is, when a symbol approaches $L-1$, we increase $\epsilon$ to $\epsilon'>\epsilon$. Then, we update all symbols to maintain unbiased estimation under the new scale.

Define $l'$ to be the largest integer such that $A_{\epsilon'}(l') \le A_{\epsilon}(l)$. For our estimation function, this value is
\begin{equation} \label{upscaling}
l'=\left\lfloor log_{1+2\epsilon'^2}\left(1+\frac{2\epsilon'^2A_{\epsilon}(l)}{1+\epsilon'^2}\right)\right\rfloor
\end{equation}

The correct estimation for symbol $l$ lies between $A_{\epsilon'}(l')$ and $A_{\epsilon'}(l'+1)$. We update to $l'+1$ with probability proportional to the difference between $A_{\epsilon}(l)$ and $A_{\epsilon'}(l')$:
$$\frac{A_{\epsilon}(l)-A_{\epsilon'}(l')}{A_{\epsilon'}(l'+1)-A_{\epsilon'}(l')}$$
and to $l'$ otherwise.  Algorithm~\ref{symbolupscale} describes the symbol upscale procedure.

\begin{algorithm}
\caption{Symbol Upscale}\label{symbolupscale}
\begin{algorithmic}[1]

\Procedure{SymbolUpscale}{$l$,$\epsilon$,$\epsilon'$}
\State $l'\gets
\left\lfloor log_{1+2\epsilon'^2}\left(1+\frac{2\epsilon'^2A_{\epsilon}(l)}{1+\epsilon'^2}\right)\right\rfloor $
\State $r\gets rand(0,1)$
\If {$r<\frac{A_{\epsilon}(l)-A_{\epsilon'}(l')}{A_{\epsilon'}(l'+1)-A_{\epsilon'}(l')}$}
	\State $l\gets l'+1$
\Else
	\State $l\gets l'$
\EndIf
\EndProcedure
\end{algorithmic}
\end{algorithm}
\subsection{Analysis}
\label{sec:analysis}
\subsubsection{Performance Metrics}
There are several metrics for the accuracy of an estimation function.
Throughout this paper, we discuss the quality of estimation mainly in terms of the \emph{root mean squared relative error (RMSRE)}, or \emph{relative error} in short. Denote $\hat{n}$ the random variable representing the estimation value of a flow after $n$ packets have arrived. The mean square relative error (MSRE) of a flow of size~$n$ is:
$$MSRE\left[n\right]=\mathbb{E}\left[\left(\frac{\hat{n}-n}{n}\right)^2\right]$$
and the root mean square relative error (RMSRE) is
$$RMSRE\left[n\right]=
\sqrt{\mathbb{E}\left[\left(\frac{\hat{n}-n}{n}\right)^2\right]}$$

We want the \emph{maximum relative error},
$$\epsilon_{max}=\max_{n\le M}RMSRE\left[n\right],$$
to be as small as possible.

When counting multiple flows, we can also measure the \emph{overall relative error}. Let $n_i$ be the true value of counter $i$. The overall relative error is the root mean square relative error over all $N$ counters,
$$\epsilon_{overall}=\sqrt{\frac{1}{N}\sum_{i}MSRE\left[n_i\right]}$$

Another metric for the accuracy of an estimation function is the \emph{hitting time}. The \emph{hitting time} is defined to be the random variable $T(l)$ that represents the amount of traffic required for a certain counter to be estimated as $A\left(l\right)$. The expected hitting time for symbol $l$ is simply $A\left(l\right)$ in our case, according to Theorem~2 in~\cite{CEDAR}, and the \emph{Coefficient of Variation (CV)} of the hitting time is $$CV\left[T\left(l\right)\right]=\frac{\sigma[T(l)]}{\mathbb{E}[T(l)]}.$$
Simplistically, $CV\left[T\left(l\right)\right]$ measures the relative error when the symbol becomes $l$.
In Theorem~\ref{optimality} below, we prove that our estimation function is optimal in terms of the \emph{maximum CV of the hitting time},
$$\delta_{max} = \max_{l<L}CV\left[T\left(l\right)\right].$$

\subsubsection{Optimality}
\label{sec:optimality}
An estimation function is considered optimal if it minimizes $\delta_{max}$ given $M$, the desired counting capacity.
To show that our estimation function is optimal, we rely on Theorems 3 and 4 in~\cite{CEDAR}, stating that a function that satisfies the following recursive formula is optimal with $\delta_{max}=\delta$.
\begin{align}
\forall{l}, A(l+1)-A(l)=&\frac{1+2\delta^2A(l)}{1-\delta^2} \label{recursive:1}\\
A\left(0\right)=&0 \label{recursive:2}
\end{align}
Furthermore, this function is unique.

\begin{theorem}\label{optimality}
$A_\epsilon(l)= \frac{(1+2\epsilon^2)^l-1}{2\epsilon^2}(1+\epsilon^2)$ is an optimal estimation function.
\end{theorem}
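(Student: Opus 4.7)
The plan is to reduce the theorem to the characterization already quoted from CEDAR: a function satisfying the initial condition $A(0)=0$ together with the recurrence $A(l+1)-A(l)=\frac{1+2\delta^2 A(l)}{1-\delta^2}$ is the unique optimal estimator, with $\delta_{\max}=\delta$. So the task is simply to verify that the closed form $A_\epsilon(l)=\frac{(1+2\epsilon^2)^l-1}{2\epsilon^2}(1+\epsilon^2)$ satisfies these two conditions for an appropriate choice of $\delta$ expressed in terms of $\epsilon$.

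First I would plug in $l=0$ and observe that $(1+2\epsilon^2)^0-1=0$, so $A_\epsilon(0)=0$ trivially. Next I would compute the first-difference in closed form: since $(1+2\epsilon^2)^{l+1}-(1+2\epsilon^2)^l=2\epsilon^2(1+2\epsilon^2)^l$, a one-line calculation gives
\begin{equation*}
A_\epsilon(l+1)-A_\epsilon(l) \;=\; (1+\epsilon^2)(1+2\epsilon^2)^l.
\end{equation*}

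The algebraic heart of the proof is then to exhibit a $\delta$ making $\frac{1+2\delta^2 A_\epsilon(l)}{1-\delta^2}$ equal to the above. The natural guess is $\delta^2=\frac{\epsilon^2}{1+\epsilon^2}$, which gives $1-\delta^2=\frac{1}{1+\epsilon^2}$ and $2\delta^2=\frac{2\epsilon^2}{1+\epsilon^2}$. Substituting the closed form of $A_\epsilon(l)$ and simplifying, the numerator telescopes: $1+2\delta^2 A_\epsilon(l)=1+(1+2\epsilon^2)^l-1=(1+2\epsilon^2)^l\cdot\frac{1}{1+\epsilon^2}\cdot(1+\epsilon^2)$ after the appropriate arithmetic, and dividing by $1-\delta^2=1/(1+\epsilon^2)$ yields precisely $(1+\epsilon^2)(1+2\epsilon^2)^l$, matching the first difference.

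With both the base case and the recurrence verified, Theorems~3 and~4 of \cite{CEDAR} apply directly: $A_\epsilon$ is the unique optimal estimator, and its maximal CV of the hitting time is $\delta_{\max}=\epsilon/\sqrt{1+\epsilon^2}$. I do not expect any genuine obstacle here, since everything reduces to the two-line identity above; the only subtle point is guessing the right reparameterization $\delta^2=\epsilon^2/(1+\epsilon^2)$, which is forced by matching the leading constants on both sides of the recurrence.
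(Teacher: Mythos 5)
Your proposal is correct and follows essentially the same route as the paper: verify $A_\epsilon(0)=0$, compute the first difference, and check the CEDAR recurrence under the substitution $\delta^2=\epsilon^2/(1+\epsilon^2)$. The only cosmetic difference is that you write the difference in the closed form $(1+\epsilon^2)(1+2\epsilon^2)^l$ and verify the recurrence's right-hand side against it, whereas the paper expresses the difference as $2\epsilon^2 A_\epsilon(l)+(1+\epsilon^2)$ and substitutes $\epsilon^2=\delta^2/(1-\delta^2)$; the two calculations are interchangeable.
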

\begin{proof}
Clearly, $A_\epsilon(0)=0$ and thus condition \eqref{recursive:2} holds.

\begin{equation} \label{A recursion}
\begin{split}
A_{\epsilon}(l+1)
&= \frac{\left((1+2\epsilon^2)^{l+1}-1\right)
(1+\epsilon^2)}{2\epsilon^2}\\
&=\frac{\left((1+2\epsilon^2)(1+2\epsilon^2)^{l}
-(1+2\epsilon^2) + 2\epsilon^2\right)
(1+\epsilon^2)}{2\epsilon^2}\\
&=(1+2\epsilon^2)A_{\epsilon}\left(l\right)+(1+\epsilon^2)
\end{split}
\end{equation}
Therefore,
\begin{equation} \label{diff as A}
A_\epsilon(l+1)-A_\epsilon(l)=2\epsilon^2A_{\epsilon}\left(l\right)+(1+\epsilon^2)
\end{equation}
Choose
$$\delta = \sqrt{\frac{\epsilon^2}{1+\epsilon^2}}$$
to get
$$\epsilon^2 = \frac{\delta^2}{1-\delta^2}.$$
Putting this into \eqref{diff as A} we obtain
$$A_\epsilon(l+1)-A_\epsilon(l)
=2\frac{\delta^2A_{\epsilon}\left(l\right)}{1-\delta^2}+1+\frac{\delta^2}{1-\delta^2}=\frac{1+2\delta^2A_{\epsilon}(l)}{1-\delta^2}.$$
Hence, condition \eqref{recursive:1} holds.
Our estimation function satisfies conditions~\eqref{recursive:1} and~\eqref{recursive:2} and is therefore optimal.
\end{proof}

Thus, our estimation function is identical to the one given in CEDAR~\cite{CEDAR}, which was previously known only in recursive form and was only analyzed with respect to its hitting time.
In this work, we also analyze the relative error $RMSRE$, which is a natural metric to discuss.

\subsubsection{Relative Error}
\label{sec:relative error}
\begin{theorem} \label{CV theorem}
The optimal estimation function $(A_\epsilon)$ gives a relative error of
$$RMSRE\left[n\right]=\epsilon,\qquad\forall{n}$$
\end{theorem}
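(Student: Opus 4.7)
\medskip

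\noindent\textbf{Proof plan.}
Since the increment rule is designed so that $\mathbb{E}[\hat n]=n$ (the estimator is unbiased), the square of the relative error satisfies
$$RMSRE[n]^2 \;=\; \mathbb{E}\!\left[\left(\tfrac{\hat n-n}{n}\right)^{\!2}\right] \;=\; \tfrac{1}{n^2}\operatorname{Var}[\hat n].$$
So the theorem is equivalent to the identity $\operatorname{Var}[\hat n]=n^2\epsilon^2$, and my plan is to prove $\mathbb{E}[\hat n^{\,2}]=(1+\epsilon^2)n^2$ by induction on $n$.

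\medskip

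\noindent\textbf{One-step recursion.}
Let $X_n=A_\epsilon(l_n)$ denote the estimation value after $n$ arrivals, with $X_0=0$. Conditioned on $X_n$, the next symbol is incremented with probability $1/D(l_n)$, where by equation \eqref{diff as A} the step size is
$$D(l_n)=A_\epsilon(l_n+1)-A_\epsilon(l_n)=2\epsilon^2 X_n+(1+\epsilon^2).$$
Writing $D=D(l_n)$, a direct computation gives
$$\mathbb{E}[X_{n+1}^{\,2}\mid X_n]=\tfrac{1}{D}(X_n+D)^2+\bigl(1-\tfrac{1}{D}\bigr)X_n^{\,2}=X_n^{\,2}+2X_n+D.$$
Substituting the expression for $D$ eliminates the $X_n^{\,2}$ coefficient beyond the $1$ already present, yielding the key affine recursion
$$\mathbb{E}[X_{n+1}^{\,2}\mid X_n]\;=\;X_n^{\,2}+2(1+\epsilon^2)X_n+(1+\epsilon^2).$$
This step is the heart of the argument: the specific algebraic form of $A_\epsilon$, via \eqref{diff as A}, is precisely what makes the second-moment recursion linear in $X_n$ rather than quadratic.

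\medskip

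\noindent\textbf{Solving the recursion.}
Taking expectations and using $\mathbb{E}[X_n]=n$ gives
$$\mathbb{E}[X_{n+1}^{\,2}]=\mathbb{E}[X_n^{\,2}]+2(1+\epsilon^2)n+(1+\epsilon^2).$$
Telescoping from $X_0=0$ and summing $\sum_{k=0}^{n-1}(2k+1)=n^2$ gives
$$\mathbb{E}[X_n^{\,2}]=(1+\epsilon^2)n^2,$$
hence $\operatorname{Var}[\hat n]=\mathbb{E}[\hat n^{\,2}]-n^2=\epsilon^2 n^2$, and $RMSRE[n]=\epsilon$ as claimed.

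\medskip

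\noindent\textbf{Expected obstacles.}
The only non-routine moves are (i) justifying $\mathbb{E}[\hat n]=n$, which is already standard since $\mathbb{E}[X_{n+1}\mid X_n]=X_n+1$ by construction of the increment probability, and (ii) spotting that one should substitute the identity \eqref{diff as A} into the raw second-moment recursion to kill the $X_n^{\,2}/D$ terms. Everything else is bookkeeping: a one-line conditional expectation, a geometric-style sum, and an application of unbiasedness. No tail bounds or asymptotic analysis are needed because the identity $RMSRE[n]=\epsilon$ is exact and holds for every $n$, which incidentally also verifies that $\delta_{\max}$ and $\epsilon_{\max}$ coincide for this estimator.
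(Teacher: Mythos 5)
Your proof is correct and follows essentially the same route as the paper: both establish the one-step identity $\mathbb{E}[\hat{(n+1)}^2]-\mathbb{E}[\hat{n}^2]=(1+\epsilon^2)(2n+1)$ by exploiting the affine relation $D_\epsilon(l)=2\epsilon^2 A_\epsilon(l)+(1+\epsilon^2)$ from \eqref{diff as A}, then telescope to $\mathbb{E}[\hat{n}^2]=(1+\epsilon^2)n^2$. The only difference is notational --- you condition on $X_n$ where the paper sums against the symbol distribution $Q_l(n)$ --- so no further comparison is needed.
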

\begin{proof}
To prove this theorem, we use a technique similar to the one used in~\cite{ANLS}.\\
Let $Q_l(n)$ be the probability to have a symbol $l$ given that exactly $n$ packets have arrived at the flow. As mentioned before, the estimator is unbiased, thus
\begin{equation} \label{unbiased mean}
{\sum_{l}Q_l(n)A_{\epsilon}(l)=n.}
\end{equation}
In order to calculate $\epsilon$, we should first find the variance of the estimation value. We already know its mean, so let us find
$$\mathbb{E}\left[\hat{n}^2\right]=\sum_{l}Q_l(n)A_{\epsilon}^2(l).$$
We first compute
$\mathbb{E}\left[\hat{\left(n+1\right)}^2
-\hat{(n)}^2\right]$.
Recall that if the symbol is $l$, when a packet arrives the symbol is incremented with probability
$\frac{1}{D_\epsilon(l)} =\frac{1}{A_\epsilon(l+1)-A_\epsilon(l)}$ or remains unchanged with probability $1-\frac{1}{D_\epsilon(l)}$. Therefore,
\begin{align*}
&\mathbb{E}\left[\hat{\left(n+1\right)}^2
-\hat{(n)}^2\right]\\
&=\sum_{l}\left(A_\epsilon(l+1)^2-A_\epsilon(l)^2\right)\cdot \frac{1}{D_\epsilon(l)} Q_l(n)\\
&=\sum_{l}\left(A_\epsilon(l+1)+A_\epsilon(l)\right)Q_l(n)\\
\end{align*}
We can use \eqref{A recursion} to substitute $A_\epsilon(l+1)$ with $$(1+2\epsilon^2)A_{\epsilon}\left(l\right)+(1+\epsilon^2)$$ and obtain
$$\mathbb{E}\left[\hat{\left(n+1\right)}^2
-\hat{(n)}^2\right] =
\sum_{l}(1+\epsilon^2)(2A_\epsilon(l)+1)Q_l(n).$$
This can be separated to a constant multiplied by the unbiased mean, and another constant times the sum of a probability vector. We get
\begin{align*}
\mathbb{E}&\left[\hat{\left(n+1\right)}^2-\hat{(n)}^2\right]\\
&=(1+\epsilon^2)
\left(2\sum_{l}A_\epsilon(l)Q_l(n)+\sum_{l}Q_l(n)\right)\\
&=(1+\epsilon^2)(2n+1)\\
\end{align*}

Now, we can calculate $\mathbb{E}\left[\hat{n}^2\right]$
\begin{align*}
\mathbb{E}\left[\hat{n}^2\right]
&=\sum_{i=0}^{n-1}{\left(\mathbb{E}[\hat{(i+1)}^2]-\mathbb{E}[\hat{(i)}^2]\right)} +\mathbb{E}[\hat{0}^2]\\
&=\sum_{i=0}^{n-1}{(1+\epsilon^2)(2i+1)}
=(1+\epsilon^2)\frac{n\left(2n-1+1\right)}{2}\\
&=(1+\epsilon^2)n^2\\
\end{align*}
Hence, the variance is
$$\mathbb{V}\left[\hat{n}\right]
=\mathbb{E}\left[\hat{n}^2\right]
-\mathbb{E}\left[\hat{n}\right]^2
=\epsilon^2n^2$$
and the relative error is
$$RMSRE\left[n\right]
=\sqrt{\frac{\mathbb{V}\left[\hat{n}\right]}
{n^2}}
=\epsilon$$
\end{proof}
Note that the relative error is independent of $n$, and  therefore $\epsilon_{max}=\epsilon$. 
For comparison, DISCO's~\cite{DISCOJournal} estimation function,
$$DISCO(l) = \frac{(1+2\epsilon^2)^l-1}{2\epsilon^2},$$
also achieves a relative error bounded by $\epsilon$.
Thus, DISCO guarantees the same accuracy as the optimal estimation function but its counting capacity is $1+\epsilon^2$ times smaller. In most cases, this factor is negligible.

\subsubsection{Upscale Error}

We now show how to use linear programming to prove, for some $\epsilon$ and $\epsilon'$, that no upscale operation from $\epsilon$ to $\epsilon'$ increases the relative error to more than~$\epsilon'$.

Consider the change in variance when an upscale from~$\epsilon$ to~$\epsilon'$ occurs.
Let $Q_l$ denote the probability to have a symbol $l$ before upscale.
Recall that the probability to use $l'$ as defined in Algorithm~\ref{symbolupscale} is  $$\frac{A_{\epsilon'}(l'+1)-A_{\epsilon}(l)}{D_{\epsilon'}(l')}$$ and the probability to use $l'+1$ is $$\frac{A_{\epsilon}(l)-A_{\epsilon'}(l')}{D_{\epsilon'}(l')}.$$
The expected estimation value remains unchanged after upscale~\cite{CEDAR}.
Therefore, the change in variance is:
\begin{align*}
\Delta \mathbb{V}=\sum_{l}& Q_l
\frac{A_{\epsilon'}(l'+1)-A_{\epsilon}(l)}{D_{\epsilon'}(l')}
\cdot \left(A^2_{\epsilon'}(l')-A^2_\epsilon(l)\right)\\
&+Q_l\frac{A_{\epsilon}(l)-A_{\epsilon'}(l')}{D_{\epsilon'}(l')}
\cdot \left(A^2_{\epsilon'}(l'+1)-A^2_{\epsilon}(l)\right)\\
=\sum_{l}& Q_l
\frac{\left(A_{\epsilon'}(l'+1)-A_{\epsilon}(l)\right)
\left(A_{\epsilon}(l)-A_{\epsilon'}(l')\right)}
{D_{\epsilon'}(l')} \cdot\\
&\cdot \left(-A_{\epsilon'}(l')-A_\epsilon(l)
+A_{\epsilon'}(l'+1)+A_{\epsilon}(l)\right)\\
=\sum_{l}& Q_l\left(A_{\epsilon'}(l'+1)-A_{\epsilon}(l)\right)
\left(A_{\epsilon}(l)-A_{\epsilon'}(l')\right)\\
\end{align*}

If we find $\alpha\ge 0$ and $\beta\ge 0$ such that $\forall{0\le l<L}$,
\begin{equation}
\label{LP constaints}
\left(A_{\epsilon'}(l'+1)-A_{\epsilon}(l)\right)
\left(A_{\epsilon}(l)-A_{\epsilon'}(l')\right)\le \alpha A^2_{\epsilon}(l)+\beta A_{\epsilon}(l)
\end{equation}
we may conclude that the MSRE is
\begin{multline*}
MSRE[n]\le\frac{n^2\epsilon^2+\Delta \mathbb{V}}{n^2}\\
\le\frac{n^2\epsilon^2+\sum_{l} Q_l\left(\alpha A^2_{\epsilon}(l)+\beta A_{\epsilon}(l)\right)}{n^2}.
\end{multline*}

Recall that
$$\sum_{l} Q_l A^2_{\epsilon}(l)
=\mathbb{E}\left[A^2_\epsilon(l)\right]
=\mathbb{V}\left[\hat{n}\right]+\left(\mathbb{E}\left[\hat{n}\right]\right)^2
\le n^2(1+\epsilon^2).$$
Therefore, if \ref{LP constaints} holds for every $l$,
\begin{align*}
MSRE[n]&\le\frac{n^2\epsilon^2+\alpha n^2\left(1+\epsilon^2\right)+\beta n}{n^2}\\
&=\epsilon^2+\alpha \left(1+\epsilon^2\right)+\frac{\beta}{n}\\
&\le\epsilon^2+\alpha \left(1+\epsilon^2\right)+\beta.\\
\end{align*}

Define the following two variable linear problem: $$Minimize\qquad\epsilon^2+\alpha \left(1+\epsilon^2\right)+\beta$$
such that the constraint \eqref{LP constaints} holds for every $0\le l<L$.

We solved this simple LP for a wide range of parameters and found that the objective is minimized to $\epsilon'^2$ in all of these cases.
We therefore conjecture that the relative error is always bounded by $\epsilon'$ after an upscale operation from $\epsilon$~to~$\epsilon'$.

\subsubsection{Memory Complexity}
We now evaluate how many bits per symbol ($\lceil\log_2 L\rceil$) are needed to count to $M$ with a relative error of~$\epsilon$.
\begin{theorem}
\label{thm:memory} The required number of bits per symbol is 
\begin{multline*}
\lceil\log_2 L\rceil\\
=\left\lceil\log_2\left( 1+\frac{\ln\left(\left(2M+1\right)\epsilon^2+1\right)-\ln\left(1+\epsilon^2\right)}{\ln\left(1+2\epsilon^2\right)}\right)\right\rceil\\
=\log_2\ln \left((2M+1)\epsilon^2+1\right)+\log_2\epsilon^{-2}+\Theta(1)
\end{multline*}
\end{theorem}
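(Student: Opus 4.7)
The plan is to start from the observation that the maximum value representable by a symbol drawn from $\{0,\dots,L-1\}$ is $A_\epsilon(L-1)$, so the counting-capacity requirement is simply $A_\epsilon(L-1)\ge M$. Since the estimation function $A_\epsilon$ is strictly increasing in $l$, I would determine the minimal integer $l^\star$ satisfying $A_\epsilon(l^\star)\ge M$, and then note that the number of bits needed to represent indices $0,\dots,l^\star$ is $\lceil\log_2(l^\star+1)\rceil$, which will match the claimed formula after setting $L=l^\star+1$.

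For the first (exact) equality, I would plug the closed form $A_\epsilon(l)=\frac{(1+2\epsilon^2)^l-1}{2\epsilon^2}(1+\epsilon^2)$ from \eqref{optimal estimation function} into the inequality $A_\epsilon(l)\ge M$ and solve algebraically. Multiplying through by $\frac{2\epsilon^2}{1+\epsilon^2}$ and adding $1$ yields
\[
(1+2\epsilon^2)^l \;\ge\; 1+\frac{2M\epsilon^2}{1+\epsilon^2} \;=\; \frac{(2M+1)\epsilon^2+1}{1+\epsilon^2},
\]
and taking $\ln$ on both sides (dividing by $\ln(1+2\epsilon^2)>0$) gives the bound on $l$ whose ceiling is exactly the quantity appearing inside the outer $\lceil\log_2(\cdot)\rceil$ in the theorem. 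Setting $L$ to be the smallest power of two no less than $l^\star+1$ reproduces the stated closed form.

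For the asymptotic equality, I would bound the auxiliary ratios
\[
\frac{\ln(1+\epsilon^2)}{\ln(1+2\epsilon^2)} \quad\text{and}\quad \frac{\ln(1+2\epsilon^2)}{2\epsilon^2},
\]
showing each is $\Theta(1)$ on the parameter regime of interest (in fact, both tend to explicit constants as $\epsilon\to 0$ by the Taylor expansion $\ln(1+x)=x-\tfrac{x^2}{2}+O(x^3)$, and both are bounded away from $0$ and $\infty$ on any compact range of $\epsilon$ bounded away from $0$). Writing
\[
L \;=\; 1 + \frac{\ln\!\bigl((2M+1)\epsilon^2+1\bigr)}{\ln(1+2\epsilon^2)} - \frac{\ln(1+\epsilon^2)}{\ln(1+2\epsilon^2)},
\]
the last term is $\Theta(1)$ and can therefore be absorbed, while in the main term I would substitute $\ln(1+2\epsilon^2) = 2\epsilon^2\cdot\Theta(1)$ to obtain $L = \Theta(1) + \Theta(1)\cdot\frac{\ln((2M+1)\epsilon^2+1)}{2\epsilon^2}$. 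Taking $\log_2$ then splits into $\log_2\ln\bigl((2M+1)\epsilon^2+1\bigr) + \log_2\epsilon^{-2} + \Theta(1)$, as required.

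The routine part is the algebra in the second step. The delicate point, and the one I would treat most carefully, is the asymptotic bookkeeping: one must verify that the $\Theta(1)$ absorption is uniform in both $M$ and $\epsilon$ (or at least over the intended parameter regime, namely $M\to\infty$ with $\epsilon$ small but not vanishing faster than some prescribed rate), so that the ``additive'' $\Theta(1)$ really does capture the combined effect of the $\lceil\cdot\rceil$ rounding, the subtracted $\ln(1+\epsilon^2)$ term, and the replacement of $\ln(1+2\epsilon^2)$ by $2\epsilon^2$. Keeping $\epsilon$ bounded (i.e.\ $\epsilon\le \epsilon_0$ for a fixed constant) is the cleanest way to guarantee this and is consistent with how $\epsilon$ is used elsewhere in the paper.
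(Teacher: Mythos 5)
Your proposal is correct and follows essentially the same route as the paper: solve $A_\epsilon(L-1)=M$ for $L$ to get the exact expression, then control the logarithm ratios to extract the asymptotic form. The paper makes your $\Theta(1)$ bookkeeping explicit by sandwiching $\ln(1+x)$ between $\tfrac{x}{1+x}$ and $x$ (yielding concrete bounds $\tfrac12+\tfrac{\ln((2M+1)\epsilon^2+1)}{2\epsilon^2}\le L< 2\cdot\tfrac{\ln((2M+1)\epsilon^2+1)}{\epsilon^2}$ under $\epsilon<1$ and $L\ge 2$), which is exactly the uniformity argument you flag as the delicate point.
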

\begin{proof}
We begin with the estimation value for symbol $L-1$, according to Equation~\eqref{optimal estimation function}:
$$M = \frac{\left(1+2\epsilon^2\right)^{L-1}-1}{2\epsilon^2}\cdot\left(1+\epsilon^2\right).$$
We solve for $L$:
$$\frac{2M\epsilon^2}{1+\epsilon^2}+1 = \left(1+2\epsilon^2\right)^{L-1}$$
\begin{multline*}
L = 1+\frac{\ln\left(\frac{2M\epsilon^2}{1+\epsilon^2}+1\right)}{\ln\left(1+2\epsilon^2\right)}\\
=1+\frac{\ln\left(\left(2M+1\right)\epsilon^2+1\right)-\ln\left(1+\epsilon^2\right)}{\ln\left(1+2\epsilon^2\right)}.\\
\end{multline*}
Thus,
\begin{multline*}
\lceil\log_2 L\rceil\\
=\left\lceil\log_2\left( 1+\frac{\ln\left(\left(2M+1\right)\epsilon^2+1\right)-\ln\left(1+\epsilon^2\right)}{\ln\left(1+2\epsilon^2\right)}\right)\right\rceil.
\end{multline*}

Next, we bound $L$ from both directions to get a simpler expression.
We start from below. We use the inequality
\begin{equation}
\ln (1+x) < x\text{ for }x>0 \label{eq:lnl}
\end{equation}
to obtain
\begin{multline}
L \ge 1 + \frac{\ln \left((2M+1)\epsilon^2+1\right)-\epsilon^2}{2\epsilon^2}\\
=\frac{1}{2} + \frac{\ln \left((2M+1)\epsilon^2+1\right)}{2\epsilon^2}. \label{eq:memory lower bound}
\end{multline}

Next, we bound $L$ from above. We use the inequality
\begin{equation}
\ln (1+x) > \frac{x}{1+x} \label{eq:lng}
\end{equation} for $x>0$,
\begin{multline}
L\le 1+\frac{\ln\left((2M+1)\epsilon^2+1\right)-\frac{\epsilon^2}{1+\epsilon^2}}{\frac{2\epsilon^2}{1+2\epsilon^2}}\\
\le 1+\frac{1+2\epsilon^2}{2\epsilon^2}\ln\left((2M+1)\epsilon^2+1\right)
-\frac{1+2\epsilon^2}{2+2\epsilon^2}
~\label{eq:tightMiddlestep}
\end{multline}
Notice that $\frac{1+2\epsilon^2}{2+2\epsilon^2} > \frac{1}{2}$, therefore we continue by replacing $\frac{1+2\epsilon^2}{2+2\epsilon^2}$ with $\frac{1}{2}$ in Inequality~\eqref{eq:tightMiddlestep}. We get: 
\begin{multline}
L<\frac{1}{2}+\left(1+2\epsilon^2\right)\frac{\ln\left((2M+1)\epsilon^2+1\right)}{2\epsilon^2}.
\label{eq:tight memory upper bound}
\end{multline}
Since $L\ge 2$, we get from Inequality~\eqref{eq:tight memory upper bound} that $$1.5 \le \left(1+2\epsilon^2\right)\frac{\ln\left((2M+1)\epsilon^2+1\right)}{2\epsilon^2}.$$
Thus, replacing the number $\frac{1}{2}$ from Inequality~\eqref{eq:tight memory upper bound} with the above expression divided by 3, we
obtain $$L\le \frac{4\left(1+2\epsilon^2\right)}{6}\cdot\frac{\ln\left((2M+1)\epsilon^2+1\right)}{\epsilon^2}.$$
$\epsilon<1$ and therefore $$L<2\cdot \frac{\ln\left((2M+1)\epsilon^2+1\right)}{\epsilon^2}.$$

Consequently, the amount of bits required to guarantee an error of at most $\epsilon$ and a counting capacity of at least $M$ is
\begin{equation}
\left\lceil \log_2 L\right\rceil
\le \left\lceil 1+\log_2\ln\left((2M+1)\epsilon^2+1\right)+\log_2\epsilon^{-2}\right\rceil
\label{eq:final memory lower bound}
\end{equation}
and on the other hand (from Inequality~\eqref{eq:memory lower bound})
\begin{equation}
\left\lceil \log_2 L\right\rceil
\ge \left\lceil -1+\log_2\ln\left((2M+1)\epsilon^2+1\right)+\log_2\epsilon^{-2}\right\rceil.
\label{eq:final memory upper bound}
\end{equation}
Therefore, the amount of memory required is
\begin{equation*}
\left\lceil\log_2 L\right\rceil = \log_2\ln \left((2M+1)\epsilon^2+1\right)+\log_2\epsilon^{-2}+\Theta(1).
\end{equation*}
\end{proof}

\begin{corollary} For a constant $\epsilon$, the required number of bits is $O(\log\log M)$. \end{corollary}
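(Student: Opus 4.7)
The plan is to invoke Theorem~\ref{thm:memory} directly and then perform a short asymptotic simplification, treating $\epsilon$ as a constant so that every factor depending only on $\epsilon$ collapses into the $\Theta(1)$ term.

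First I would recall the explicit expression
\[
\lceil\log_2 L\rceil = \log_2\ln\!\left((2M+1)\epsilon^2+1\right) + \log_2\epsilon^{-2} + \Theta(1),
\]
established in Theorem~\ref{thm:memory}. With $\epsilon$ fixed, the additive term $\log_2\epsilon^{-2}$ is a constant and can be absorbed into $\Theta(1)$.

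Next I would analyze the remaining term $\log_2\ln\!\left((2M+1)\epsilon^2+1\right)$ as $M\to\infty$. Since $\epsilon$ is constant, the argument $(2M+1)\epsilon^2 + 1$ is $\Theta(M)$, so $\ln\!\left((2M+1)\epsilon^2+1\right) = \ln M + O(1) = \Theta(\log M)$. Taking another logarithm gives $\log_2 \Theta(\log M) = \log_2\log_2 M + O(1) = O(\log\log M)$. Combining with the constant contribution yields $\lceil\log_2 L\rceil = O(\log\log M)$, which is the desired bound.

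No real obstacle arises here; the corollary is essentially a one-line consequence of the closed-form memory bound in Theorem~\ref{thm:memory}. The only thing worth being careful about is keeping the dependence on $\epsilon$ implicit inside the hidden constants, since for small $\epsilon$ the $\log_2\epsilon^{-2}$ term would of course no longer be negligible.
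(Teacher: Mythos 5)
Your proposal is correct and matches the paper's (implicit) reasoning: the corollary follows immediately from Theorem~\ref{thm:memory} by absorbing the constant term $\log_2\epsilon^{-2}$ and observing that $\log_2\ln\left((2M+1)\epsilon^2+1\right)=O(\log\log M)$. The paper gives no separate proof, treating it as exactly this one-line consequence.
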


\begin{figure}[ht]
\centering
\includegraphics[width=\linewidth]{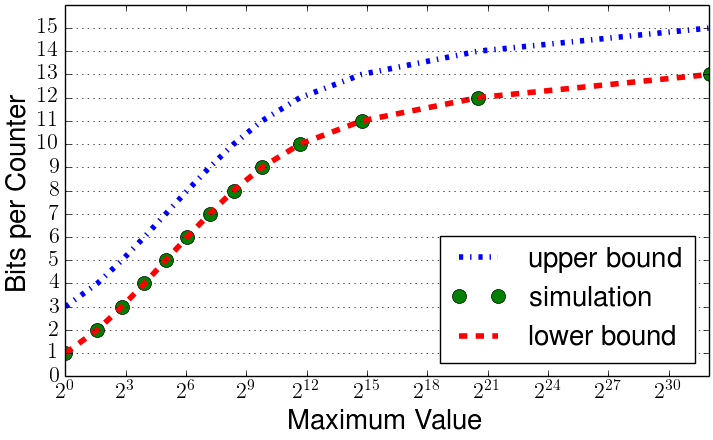}
\caption{Simulation and bounds for the number of bits per symbol required to count with different values of $M$ for $\epsilon=2^{-5}$.}
\label{fig:memory_bounds}
\end{figure}

In figure~\ref{fig:memory_bounds}, we demonstrate the accuracy of the memory bounds we derive in Theorem~\ref{thm:memory} (inequalities~\eqref{eq:final memory lower bound} and \eqref{eq:final memory upper bound}). We simulate different numbers of bits per symbol with $\epsilon=2^{-5}$. For each value of $L$ we then calculate the counting capacity $M$ and from it the memory bounds from inequalities~\eqref{eq:final memory lower bound} and \eqref{eq:final memory upper bound}. For these parameters, we can see that the lower bound is tight, and the upper bound is only two bits larger. In addition, the figure shows that with an approximation of~$\epsilon=2^{-5}$, 13 bits are sufficient to count up to more than~$2^{32}$.

\subsubsection{$\epsilon$ as a function of $M$ and $L$}
Computing the smallest $\epsilon$ sufficient to represent a counter $M$ with $L$ possible symbols may be useful for several purposes. First, if we know $M$ in advance and we have limited memory, we can detect the optimal parameter to use and avoid the upscale phase. Second, it enables theoretical comparison of the algorithm's relative error.

To achieve a counting capacity $M$ with $\log_2 L$ bits, the error should be the $\epsilon$ that solves
\begin{equation}
M=\frac{\left(1+2\epsilon^2\right)^{L-1}-1}{2\epsilon^2}\cdot\left(1+\epsilon^2\right).
\label{eq:M}
\end{equation}
$$\frac{2M\epsilon^2}{1+\epsilon^2}
= \left(1+2\epsilon^2\right)^{L-1}-1.$$

We show two different $\epsilon$'s and show that one gives a value greater than $M$ when put into Equation~\eqref{eq:M} and the other gives a value smaller than $M$.

\begin{claim}
$$\epsilon^2\le3\frac{\ln{6M\over L-1}}{L-1}.$$
\end{claim}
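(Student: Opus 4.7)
The plan is to set $\epsilon_0^2 := 3\ln(6M/(L-1))/(L-1)$ and show that substituting $\epsilon_0$ into the right-hand side of Equation~\eqref{eq:M} yields a value at least $M$. As a function of $\epsilon^2$, that right-hand side equals
$$(1+\epsilon^2)\sum_{k=1}^{L-1}\binom{L-1}{k}(2\epsilon^2)^{k-1},$$
a power series in $\epsilon^2$ with non-negative coefficients, hence strictly increasing in $\epsilon^2$. Consequently, if $\epsilon_0$ overshoots $M$, the $\epsilon$ that exactly matches $M$ must satisfy $\epsilon^2\le\epsilon_0^2$, which is precisely the claim.

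Introducing the shorthand $n := L-1$, $u := \ln(6M/n)$, and $x := \epsilon_0^2 = 3u/n$, the identity $2Mx = 6Mu/n = ue^u$ reduces the goal to $((1+2x)^n - 1)(1+x) \ge ue^u$. Since $(1+x)\ge 1$, it suffices to prove $(1+2x)^n\ge 1 + ue^u$, and I would attack this by combining two elementary inequalities.

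First, I would apply $\ln(1+y)\ge\frac{2y}{2+y}$ for $y\ge 0$, which follows because the derivative of $\ln(1+y)-\frac{2y}{2+y}$ equals $\frac{y^2}{(1+y)(2+y)^2}\ge 0$ and both sides vanish at $y=0$. With $y=2x$ this yields $n\ln(1+2x)\ge\frac{2nx}{1+x}=\frac{6u}{1+x}$, and in the meaningful regime $x\le 1$ (equivalently $u\le n/3$) this is at least $3u$, giving $(1+2x)^n\ge e^{3u}$. I would then close with the auxiliary inequality $e^{3u}\ge 1+ue^u$ for $u\ge 0$: both sides equal $1$ at $u=0$, and the derivative $e^u(3e^{2u}-1-u)$ is strictly positive on $[0,\infty)$ since $3e^{2u}-1-u$ starts at $2$ and has derivative $6e^{2u}-1>0$.

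The main obstacle is choosing a logarithmic lower bound for $\ln(1+2x)$ that is tight enough. Plain Bernoulli $(1+2x)^n\ge 1+2nx$ is too weak, since its growth in $u$ is only linear whereas $ue^u = 2Mx$ is exponential in $u$. The sharper form $\ln(1+y)\ge\frac{2y}{2+y}$ is precisely what converts $x=3u/n$ into a factor exponential in $u$, and the constant $3$ baked into the definition of $\epsilon_0^2$ supplies just enough slack to absorb both the $(1+x)$ multiplier and the transition from $e^{3u}$ to $1+ue^u$. The regime $x>1$ lies outside this argument, but is vacuous since $\epsilon^2\le 1$ is already a stricter upper bound there.
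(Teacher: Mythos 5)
Your proof is correct, and it shares the paper's overall skeleton: substitute $\epsilon_0^2 = 3\ln\frac{6M}{L-1}/(L-1)$ into the capacity formula, show the result is at least $M$, and conclude by monotonicity of the capacity in $\epsilon$ (which you justify explicitly via the binomial expansion, whereas the paper simply asserts it). The execution diverges in the middle and at the end. The paper uses $\ln(1+y) > \frac{y}{1+y}$ together with $1+2\epsilon^2 \le 3$ to reach $(1+2\epsilon^2)^{L-1} \ge e^{2u}$ with $u = \ln\frac{6M}{L-1}$, and then closes algebraically: it factors $e^{2u}-1 = (e^u-1)(e^u+1)$ and applies $\ln(1+z) < z$ to cancel the factor $e^u-1$ against the logarithm sitting in the denominator, landing at $M + \frac{L-1}{6} > M$. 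You instead discard the $(1+\epsilon^2)$ factor immediately, use the sharper Pad\'e-type bound $\ln(1+y) \ge \frac{2y}{2+y}$ to reach the stronger intermediate estimate $(1+2x)^n \ge e^{3u}$, and finish with the calculus lemma $e^{3u} \ge 1 + ue^u$. Both routes rest on the same implicit side conditions ($6M > L-1$ so that $u > 0$, and $\epsilon_0 \le 1$, which you at least acknowledge and dispose of while the paper leaves it tacit in its ``since $\epsilon<1$'' step). The paper's endgame is a slicker one-line cancellation; yours is more mechanical but self-contained and makes more transparent why the constant $3$ in the bound provides exactly enough slack.
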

\begin{proof}
First, let
$$\epsilon^2=3\frac{\ln{6M\over L-1}}{L-1}.$$
By developing Equation~\eqref{eq:M}, we get
\begin{multline*}
A_\epsilon(L-1)=\frac{\left(1+2\epsilon^2\right)^{L-1}-1}{2\epsilon^2}\cdot \left(1+\epsilon^2\right)\\
= \frac{e^{\ln\left(1+2\epsilon^2\right)\cdot(L-1)}-1}{2\epsilon^2}\cdot \left(1+\epsilon^2\right).
\end{multline*}
Now we use Inequality~\eqref{eq:lng} to obtain
$$A_\epsilon(L-1)>\frac{e^{\frac{2\epsilon^2}{1+2\epsilon^2}\cdot(L-1)}-1}{2\epsilon^2}.$$
Since $\epsilon<1$, we have 	
\begin{multline*}
A_\epsilon(L-1)>\frac{e^{\frac{2\epsilon^2}{3}\cdot(L-1)}-1}{2\epsilon^2}
=\frac{e^{2\ln\frac{6M}{L-1}}-1}{6\frac{\ln{\frac{6M}{L-1}}}{L-1}}\\
=\frac{\left(\left(\frac{6M}{L-1}\right)^2-1\right)(L-1)}{6\ln\frac{6M}{L-1}}.
\end{multline*}
Now, we can use Inequality~\eqref{eq:lnl} to obtain
$$A_\epsilon(L-1)>\frac{\left(\frac{6M}{L-1}-1\right)\left(\frac{6M}{L-1}+1\right)(L-1)}{6\left(\frac{6M}{L-1}-1\right)}>M.$$
Since $M$ is increasing with $\epsilon$ and the chosen $\epsilon$ gives a counting capacity greater than $M$, to achieve a counting capacity of exactly $M$, $\epsilon$ must be less than or equal to $3\frac{\ln{6M\over L-1}}{L-1}$.
\end{proof}

\begin{claim}
$$\epsilon^2\ge\frac{\ln{2M+1\over 2L-1}}{2(L-1)}.$$
\end{claim}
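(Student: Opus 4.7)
The plan mirrors the strategy of the previous claim: set $\epsilon^2 = \frac{\ln\frac{2M+1}{2L-1}}{2(L-1)}$ and show that with this choice $A_\epsilon(L-1) \le M$. Since $A_\epsilon(L-1)$ is monotonically increasing in $\epsilon$, this will imply that any $\epsilon$ achieving counting capacity exactly $M$ must be at least as large as the stated lower bound.

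First I would apply Inequality~\eqref{eq:lnl} to obtain $\ln(1+2\epsilon^2) \le 2\epsilon^2$, hence $(1+2\epsilon^2)^{L-1} \le e^{2(L-1)\epsilon^2}$. The value of $\epsilon^2$ is engineered precisely so that $2(L-1)\epsilon^2 = \ln\frac{2M+1}{2L-1}$, which collapses the previous bound to $(1+2\epsilon^2)^{L-1} \le \frac{2M+1}{2L-1}$, and therefore $(1+2\epsilon^2)^{L-1} - 1 \le \frac{2(M-L+1)}{2L-1}$. Plugging this into the closed form \eqref{optimal estimation function} yields $A_\epsilon(L-1) \le \frac{(M-L+1)(1+\epsilon^2)}{(2L-1)\epsilon^2}$.

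It then remains to verify that this upper bound is at most $M$. Clearing denominators and grouping the $\epsilon^2$ terms, the inequality $\frac{(M-L+1)(1+\epsilon^2)}{(2L-1)\epsilon^2} \le M$ simplifies to $M-L+1 \le (L-1)(2M+1)\epsilon^2$. Substituting the chosen $\epsilon^2$ and using the identity $2(M-L+1) = (2M+1)-(2L-1)$, this is equivalent to $1 - \frac{2L-1}{2M+1} \le \ln\frac{2M+1}{2L-1}$, which is exactly the standard inequality $1 - 1/x \le \ln x$ evaluated at $x = \frac{2M+1}{2L-1} \ge 1$. This is a direct consequence of Inequality~\eqref{eq:lng} (apply \eqref{eq:lng} with $y = x-1$).

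The main obstacle is bookkeeping rather than mathematical depth: the slightly unusual constants $2M+1$ and $2L-1$ are tuned so that the final reduction lands exactly on $\ln x \ge 1 - 1/x$, so any coarser intermediate estimate (for example replacing $2M+1$ by $2M$ or dropping the $(1+\epsilon^2)$ factor too early) would lose the tight cancellation and break the argument. Once the algebraic rearrangement is performed carefully, the proof is a short chain of standard logarithmic inequalities.
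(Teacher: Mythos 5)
Your proof is correct and follows essentially the same route as the paper's: substitute $\epsilon^2 = \frac{\ln\frac{2M+1}{2L-1}}{2(L-1)}$, bound $(1+2\epsilon^2)^{L-1}$ from above via $\ln(1+x)<x$, then invoke $\ln(1+x)>\frac{x}{1+x}$ (in your case repackaged as $\ln x \ge 1-1/x$) to conclude $A_\epsilon(L-1)\le M$, and finish by monotonicity of the counting capacity in $\epsilon$. The only difference is cosmetic: the paper applies the second inequality directly inside the product and multiplies out to land exactly on $M$, whereas you first reduce the target to the clean form $1-\frac{2L-1}{2M+1}\le\ln\frac{2M+1}{2L-1}$.
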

\begin{proof}
Let
$$\epsilon^2=\frac{\ln{2M+1\over 2L-1}}{2(L-1)}.$$
As before,
\begin{multline*}
A_\epsilon(L-1)
=\frac{\left(1+2\epsilon^2\right)^{L-1}-1}{2\epsilon^2}\cdot \left(1+\epsilon^2\right)\\
= \frac{e^{\ln\left(1+2\epsilon^2\right)\cdot(L-1)}-1}{2\epsilon^2}\cdot \left(1+\epsilon^2\right).
\end{multline*}
We use Inequality~\eqref{eq:lnl} to get
\begin{multline*}
A_\epsilon(L-1)
\le \frac{e^{2\epsilon^2\cdot(L-1)}-1}{2\epsilon^2}\cdot \left(1+\epsilon^2\right)\\
=\frac{e^{\ln{2M+1\over 2L-1}}-1}{2\frac{\ln{2M+1\over 2L-1}}{2(L-1)}}\cdot \left(1+\frac{\ln{2M+1\over 2L-1}}{2(L-1)}\right)\\
=\frac{{2M+1\over 2L-1}-1}{2}\cdot
\left(1+\frac{2(L-1)}{\ln{2M+1\over 2L-1}}\right)
\end{multline*}
Returning to Inequality~\eqref{eq:lng}, we can get
\begin{multline*}
A_\epsilon(L-1)
\le\frac{{2M+1\over 2L-1}-1}{2}\cdot
\left(1+\frac{2(L-1)\cdot {2M+1\over 2L-1}}{{2M+1\over 2L-1}-1}\right)\\
=\frac{{2M+1\over 2L-1}-1+2(L-1)\cdot {2M+1\over 2L-1}}{2}
=M.
\end{multline*}
We have shown that an $\epsilon$ of $\frac{\ln{2M+1\over 2L-1}}{2(L-1)}$ gives a counting capacity of at most $M$. Since $M$ is increasing with $\epsilon$, we conclude that $\epsilon$ must be greater or equal to $\frac{\ln{2M+1\over 2L-1}}{2(L-1)}$.
\end{proof}
Using Bolzano-Weierstrass theorem, we conclude that there exists an $\epsilon$ such that
$$\frac{\ln{2M+1\over 2L-1}}{2(L-1)}\le\epsilon^2\le 3\frac{\ln{6M\over L-1}}{L-1}$$
for which Equation~\eqref{eq:M} holds.

The two sides of the inequality may be asymptotically distinct. But, for interesting cases, this does not happen.
If $M\le 2L-1$, then it is sufficient to use one more bit than $\lceil\log_2 L\rceil$ to represent $M$ exactly. Therefore, the interesting case is when $M>2L-1$. In this case, it is simple to see that both sides of the above equation are asymptotically $\Theta\left(\frac{\ln\frac{M}{L}}{L}\right)$. Hence, for $M>2L-1$,
$$\epsilon^2=\Theta\left(\frac{\ln\frac{M}{L}}{L}\right).$$

\begin{figure}[htp]
\centering
\includegraphics[width=\linewidth]{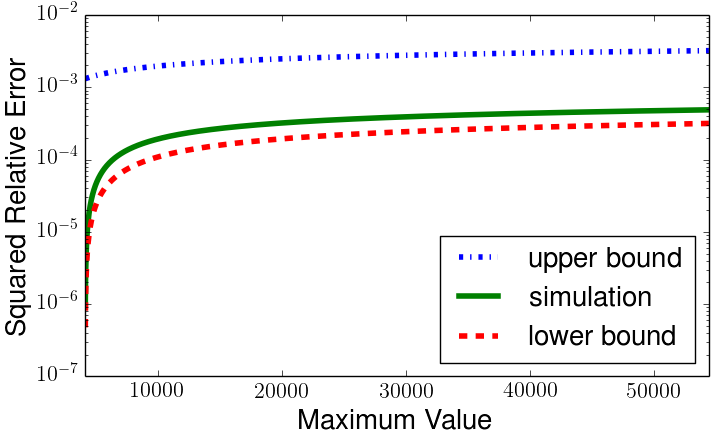}
\caption{Squared relative error and bounds with a 12-bit symbol for different values of $M$.}
\label{fig:error_bounds}
\end{figure}

Figure~\ref{fig:error_bounds} demonstrates the findings. To create it, we simulated $M$ with $L=4096$ and errors ranging between $2^{-20}$ and $2^{-11}$. For each $M$, we then calculated the lower and upper bound. We first see that the true error indeed lies between the lower and the upper bound. In this case, the lower bound is much tighter than the upper bound. For small values of $M$, where $M<2L-1=9191$, the upper bound is inaccurate. However, for larger values of $M$, we see that the two bounds and the true error are asymptotically the same, as shown above.

\subsubsection{Chebyshev Analysis}
The RMSRE metric may be unsuitable for some applications because it only describes the maximum \emph{expected} relative error rather than the maximum relative error. Some applications may require determining with certainty $1-\rho$ that the relative error $\frac{|\hat{n}-n|}{n}$ is no more than $\beta$. We can obtain that guarantee using Chebishev's inequality, according to which
$$Pr\left(\frac{|\hat{n}-n|}{n}\ge k\epsilon\right)
=Pr\left(|\hat{n}-n|\ge k\sigma\right)\le \frac{1}{k^2},$$
where $\sigma=n\epsilon$ is the standard deviation. Choosing $k=\frac{\beta}{\epsilon}$ gives us a probability $\rho = \frac{\epsilon^2}{\beta^2}$. This allows us to choose the parameter $\epsilon$ according to any given $\beta$ and $\rho$. For example, to achieve a relative error of more than $\beta=10\%$ with probability no more than $\rho=1\%$, we can use $\epsilon=\sqrt{\beta^2\rho}=1\%$.

\subsection{Decrementing Counters}
Decrementing counters can be useful for applications that need to ``forget'' old, less relevant, values, e.g., when we wish to count flows in a sliding window~\cite{Infocom2016}.
The process of decrementing a flow is very similar to the process of incrementing a flow.
Instead of incrementing $l$ with probability $\frac{1}{D(l)}$, we decrement it with probability $1\over D(l-1)$.
\begin{theorem}
For an unbounded counter, the estimation is unbiased after any number of increments or decrements.
\end{theorem}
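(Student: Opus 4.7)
The plan is to prove the theorem by induction on the total number of operations (increments and decrements) applied so far, using linearity of expectation combined with a one-step conditional argument. The ``unbounded counter'' hypothesis is what makes the argument go through cleanly, because it lets us ignore any boundary effects: every symbol $l \in \mathbb{Z}_{\geq 0}$ has well-defined values $A(l-1)$, $A(l)$, $A(l+1)$, hence well-defined probabilities $\tfrac{1}{D(l)}$ and $\tfrac{1}{D(l-1)}$.

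First I would set up the base case: before any packet arrives, $l=0$ and $\hat{n} = A_\epsilon(0) = 0 = n$, so the estimator is trivially unbiased. For the inductive step, assume that after some arbitrary sequence of operations resulting in a true count $n$, we have $\mathbb{E}[\hat{n}] = n$, where $\hat{n} = A_\epsilon(l)$ and $l$ is the (random) current symbol. I would then compute, conditionally on the event $\{l = j\}$, the expected value of the estimator after one more operation. For an increment:
\begin{equation*}
\mathbb{E}\bigl[\hat{n}_{\text{new}} \mid l = j\bigr]
= \tfrac{1}{D(j)} A_\epsilon(j+1) + \bigl(1 - \tfrac{1}{D(j)}\bigr) A_\epsilon(j)
= A_\epsilon(j) + 1,
\end{equation*}
using $D(j) = A_\epsilon(j+1) - A_\epsilon(j)$. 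Symmetrically, for a decrement,
\begin{equation*}
\mathbb{E}\bigl[\hat{n}_{\text{new}} \mid l = j\bigr]
= \tfrac{1}{D(j-1)} A_\epsilon(j-1) + \bigl(1 - \tfrac{1}{D(j-1)}\bigr) A_\epsilon(j)
= A_\epsilon(j) - 1.
\end{equation*}

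Then I would take the outer expectation over $l$ and apply the tower property together with the inductive hypothesis $\mathbb{E}[A_\epsilon(l)] = n$, to conclude $\mathbb{E}[\hat{n}_{\text{new}}] = n \pm 1$, matching the new true count $n \pm 1$. Summing over all operations in the history via linearity of expectation closes the induction.

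The main obstacle, and really the only subtle point, is handling the decrement rule at $l=0$: the formula $\tfrac{1}{D(-1)} = \tfrac{1}{A_\epsilon(0) - A_\epsilon(-1)}$ requires extending $A_\epsilon$ to negative arguments. I would address this by observing that the explicit closed form \eqref{optimal estimation function} remains well-defined for negative integer $l$, so $A_\epsilon$ naturally extends to all of $\mathbb{Z}$ and the recursion in \eqref{A recursion} holds throughout; this is precisely the content of ``unbounded counter.'' With this extension, the conditional computation above is valid for every $j$ and no case analysis at a boundary is needed. A brief remark should note that if one instead clamps the symbol at $0$, unbiasedness would fail because the virtual ``negative mass'' below zero is discarded, which is why the unboundedness assumption appears explicitly in the statement.
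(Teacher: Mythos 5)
Your proof is correct and follows essentially the same route as the paper's: induction on the number of operations, with a one-step conditional expectation showing the estimate's mean shifts by $+1$ on an increment and $-1$ on a decrement, then the tower property. Your extra care about the decrement rule at $l=0$ (extending $A_\epsilon$ to negative arguments via the closed form) addresses a boundary case the paper's proof silently glosses over, and is a welcome clarification rather than a divergence in method.
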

\begin{proof}
Let $l_t$ be the random variable that represents the counter at time $t$. Let $n_t$ be the number of increments minus the number of decrements until and including time $t$. Assume by induction on $t$ that the estimation is unbiased at time $t$, i.e. $\mathbb{E}\left[A(l_t)\right] = n_t$. We next show that the estimation remains unbiased at time $t+1$. If the symbol is incremented at time $t+1$,
\begin{multline*}
\mathbb{E}\left[A(l_{t+1)}\right]
= \sum_{x}{A(l_{t+1})Pr\left(l_t=x\right)}\\
= \sum_{x}{\left(\frac{1}{D(x)}A(x+1)+\left(1-\frac{1}{D(x)}\right)A(x)\right)Pr\left(l_t=x\right)}\\
= \sum_{x}{\frac{1}{D(x)}\left(A(x+1)-A(x)\right)Pr\left(l_t=x\right)}\\
+\sum_{x}{A(x)Pr\left(l_t=x\right)}\\
= \sum_{x}{Pr\left(l_t=x\right)}+n_t = 1+n_t
\end{multline*}
Similarly, if the symbol is decremented at time $t+1$,
\begin{multline*}
\mathbb{E}\left[A(l_{t+1)}\right]
= \sum_{x}{A(l_{t+1})Pr\left(l_t=x\right)}\\
= \sum_{x}{\frac{1}{D(x-1)}A(x-1)Pr\left(l_t=x\right)}\\
+\sum_{x}{\left(1-\frac{1}{D(x-1)}\right)A(x)Pr\left(l_t=x\right)}\\
= \sum_{x}{\frac{1}{D(x-1)}\left(A(x-1)-A(x)\right)Pr\left(l_t=x\right)}\\
+\sum_{x}{A(x)Pr\left(l_t=x\right)}\\
= \sum_{x}{-Pr\left(l_t=x\right)}+n_t = -1+n_t
\end{multline*}
Therefore, the estimation is unbiased at time $t+1$. At time 0 the estimation is 0 and therefore unbiased. In conclusion, the estimation is unbiased at every time $t$.
\end{proof}

\subsection{Downscale}
In scenarios where counters are decremented, we may find ourselves in a situation where counters are small yet their scale is large, resulting in a quickly increasing error. In this case, we may want to \emph{downscale} the counters.

Let $\epsilon$ be the current error parameter and $\epsilon'$ a smaller, desired error. We downscale all counters after checking that they can all be represented with $A_{\epsilon'}$. The process of checking whether all counters can be represented with $A_{\epsilon'}$ is repeated until all counters are small enough. In this process, we iterate over every symbol $l$ and check that $A_\epsilon(l)<A_{\epsilon'}(L-1)$. If this iteration and downscaling of the counters takes time and $U$ updates are performed during that time, we might want to check instead that $A_\epsilon(l+U)<A_{\epsilon'}(L-1)$. This guarantees that once all counters are downscaled, they can still be represented with $A_{\epsilon'}$.

Then, we update all symbols to maintain unbiased estimation under the new scale. This is done with the Symbol Downscale procedure, which is identical to the Symbol Upscale procedure defined in Algorithm~\ref{symbolupscale}. The only difference is that this time $\epsilon'<\epsilon$.

However, changing the parameter of the estimation function to $\epsilon'$ does not reduce the relative error to $\epsilon'$. It is important to distinct between the error parameter and the actual error.

Reducing the scale of a counter complicates the analysis. Therefore, in this work we assume that no decrements and no downscale occur, i.e., packets only arrive and do not leave.

\section{ICE-Buckets}
\label{ICE-Buckets}
\begin{table*}[htp]
	{
		\footnotesize
		\begin{center}
		\begin{tabular}{|c|c|}
			\hline
			Notation & Description\tabularnewline
			\hline
			\hline
			$M$ & Maximum possible number of packets.\tabularnewline
			\hline
			$L$ & Number of different possible symbols - a power of two.\tabularnewline
			\hline

			$S$ & Number of symbols per bucket.\tabularnewline
			\hline
			$N$ & Number of flows.\tabularnewline
			\hline

			$B$ & Number of buckets $\left(\frac{N}{S}\right)$. For simplicity assume that $N$ is a multitude of $S$.\tabularnewline
			\hline

			$F_{ij}$ & Symbol array of size $B\times S$. $i$ runs over the buckets and $j$ runs over the flows in each bucket. Each symbol is $\log_2L$ bits wide.\tabularnewline
			\hline
			$w_i$ & Scale parameter for bucket $i$.\tabularnewline
			\hline
			$\epsilon_{step}$ &  The difference between consecutive estimation errors.\tabularnewline
			\hline
			$\epsilon_w$ &  Estimation error for bucket with scale parameter $w$ ($\epsilon_{step}\cdot w$).\tabularnewline
			\hline
			$\varepsilon_{overall}$ & Overall relative error (averaged on all counters). \tabularnewline
			\hline
			$E$ &  Number of different possible estimation scales - a power of two.\tabularnewline
			\hline
			$T$ &  Space allocated for the data structure (bits).\tabularnewline
			\hline
		\end{tabular}	
		\normalsize
		\end{center}
		\caption{Notations}
		\label{notationdescription}
		}
\end{table*}

\subsection{Overview}
We now describe ICE-Buckets, a data structure that uses the optimal estimation function with a different scale for each bucket. First, notations specific to this section are given in Table~\ref{notationdescription}.
ICE-Buckets uses a base error parameter that is called $\epsilon_{step}$. Symbols are separated into buckets and each bucket maintains a scale parameter $w_i$ of size $\log_2 E$ bits, where $E$ is a parameter of the data structure. To estimate counters in a bucket with scale~$w$, we use the estimation function $A_{\epsilon_{step}\cdot w}$. We pay special attention to the additional memory that is allocated for each bucket and make sure that this overhead is small. Figure~\ref{ICE architecture} demonstrates this basic architecture.
\begin{figure}[h!]
\centering
\includegraphics[width=85mm]{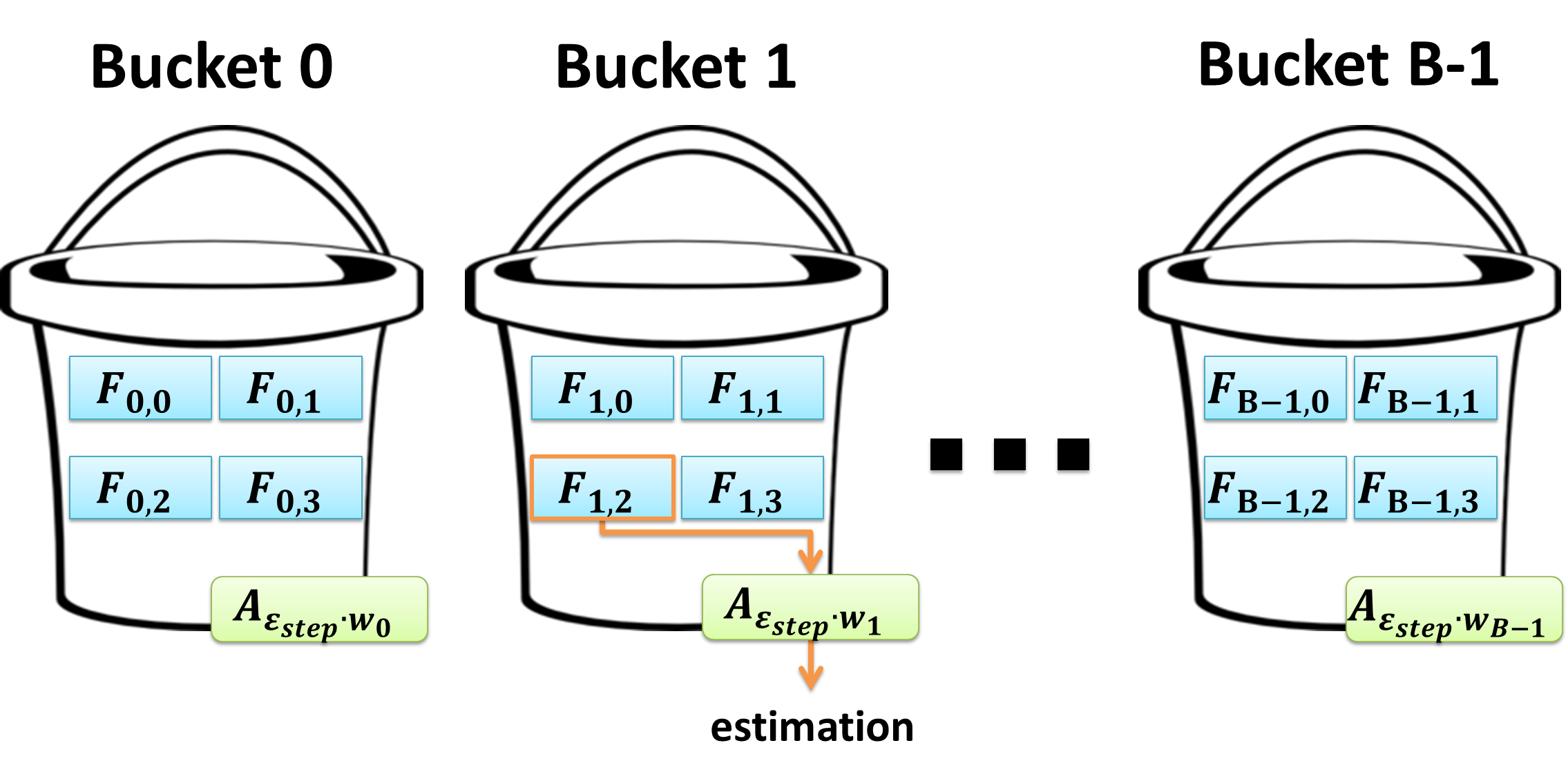}
\caption{An ICE-Buckets data structure with four flows per bucket. Bucket $i$ has a scale parameter $w_i$, which is used by the estimation function $A_{\epsilon_{step}\cdot w_i}$ to decode the symbols ($F_{i,j}$). In this example, to estimate counter 2 in bucket 1, $A_{\epsilon_{step}\cdot w_1}(F_{1,2})$ is computed.}
\label{ICE architecture}
\end{figure}
\subsection{Algorithm}
Initially, all symbols are set to zero. Since each bucket's scale is different, we first associate each flow-id with a bucket.
To estimate the value of flow $f$, we use $A_{\epsilon_{w_i}}\left(F_{ij}\right)$ where
$i=\left\lfloor\frac{f}{S}\right\rfloor$
and
$j=(f\mod{S}).$

The estimation values for bucket $i$ are calculated with an optimal estimation function of scale $\epsilon_{w_i}=\epsilon_{step}\cdot w_i$.
An optimal function with  $\epsilon = 0$ is defined as the identity function, i.e.,
$\forall{l}: A_{0}\left(l\right)=l.$

The increment process is straightforward. When a packet arrives, first we find the associated bucket $i$ and the index in the bucket, $j$. Then, we increment $F_{ij}$ with probability $\frac{1}{D_{\epsilon_{w_i}}\left(F_{ij}\right)}$.

\subsection{Analysis}
In this section, we analyze ICE-Buckets and bound both its maximum relative error and overall relative error.
\subsubsection{Maximum Relative Error}

Define the following function, $$m(\epsilon)=A_{\epsilon}(L-1)
=\frac{(1+2\epsilon^2)^{L-1}-1}{2\epsilon^2}(1+\epsilon^2),$$
which denotes the maximum representable value with error~$\epsilon$.

Define $\varepsilon(M)$ to be the smallest $\epsilon$ we need for an optimal estimation function with capacity at least $M$. For $M>L-1$, $\varepsilon(M)$ is the inverse of $m(\epsilon)$. For $M\le L-1$, $\varepsilon(M)$ is zero. $m(\epsilon)$ is increasing with $\epsilon$ and therefore $\varepsilon(M)$ is increasing with $M$.
We now show that ICE-Buckets' relative error is not larger than $\varepsilon(M)$.	

\begin{theorem} \label{ICE-Buckets maximum relative error}
The maximum relative error of ICE-Buckets can be bounded by $\varepsilon(M)$ for any distribution of the counters.
\end{theorem}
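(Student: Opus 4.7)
The plan is to argue bucket-by-bucket. Within a single bucket $i$ all counters share the same scale parameter $\epsilon_{w_i}$, so Theorem~\ref{CV theorem} tells me that each individual counter in bucket $i$ has $RMSRE=\epsilon_{w_i}$, independent of its own true value. Hence the maximum relative error across the whole structure is exactly $\max_i \epsilon_{w_i}$, and it suffices to upper bound this quantity by $\varepsilon(M)$.

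To prove that bound, I would trace how $\epsilon_{w_i}$ is actually set by the algorithm. Let $M_i$ denote the largest true count ever reached by any counter in bucket $i$; clearly $M_i\le M$ since every counter is at most $M$. The upscale mechanism raises $w_i$ only when a symbol in bucket $i$ would otherwise overflow $L-1$, so at all times $\epsilon_{w_i}$ is the smallest allowed scale capable of representing $M_i$, i.e.\ the least multiple of $\epsilon_{step}$ for which $A_{\epsilon_{w_i}}(L-1)\ge M_i$. By the definition of $\varepsilon(\cdot)$ this is $\varepsilon(M_i)$ rounded up to the nearest scale on the grid, and by the monotonicity of $\varepsilon$ we have $\varepsilon(M_i)\le\varepsilon(M)$. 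Assuming $\varepsilon(M)$ itself lies on the scale grid (the natural configuration when the parameters $\epsilon_{step}$ and $E$ are chosen), this yields $\epsilon_{w_i}\le \varepsilon(M)$ for every bucket $i$.

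Combining the two steps gives $\max_i \epsilon_{w_i}\le\varepsilon(M)$, and hence the maximum relative error of ICE-Buckets is at most $\varepsilon(M)$ regardless of how the counters are distributed among the buckets.

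The main obstacle, and the place where the proof needs the most care, is making the scale-selection argument precise in the presence of (i) the discretization of allowable scales as multiples of $\epsilon_{step}$, and (ii) the upscale operations that occur mid-execution, since Theorem~\ref{CV theorem} was originally proved for a fixed scale. The first point is handled by a standing assumption on the parameter grid; the second leans on the upscale analysis from the preceding subsection, where it is argued that upscaling from $\epsilon$ to $\epsilon'$ keeps the error at most $\epsilon'$. Once those two points are granted, the rest of the argument is essentially a direct application of Theorem~\ref{CV theorem} together with the monotonicity of $\varepsilon$.
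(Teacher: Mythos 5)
Your proposal is correct and follows essentially the same route as the paper: the paper likewise sets $\epsilon_{step}=\varepsilon(M)/(E-1)$ (so that $\varepsilon(M)$ lies on the scale grid), takes $w_i=\lceil\varepsilon(M_i)/\epsilon_{step}\rceil$ so each bucket's error is $\varepsilon(M_i)$ rounded up to the grid, and concludes via the monotonicity of $\varepsilon$ and $M_i\le M$. Your explicit caveat about upscales occurring mid-execution is a fair point that the paper's proof also glosses over (its upscale error bound is only a conjecture supported by LP experiments), so you are, if anything, slightly more careful than the original.
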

\begin{proof}
Construct an ICE-Buckets structure with $\epsilon_{step} = \frac{\varepsilon(M)}{E-1}$, where $E$ is the number of different possible estimation scales.
Let $M_i$ be the value of the biggest counter in bucket $i$. If we choose the scale of bucket $i$ to be at least
$w_i=\left\lceil
\frac{\varepsilon\left(M_i\right)}{\epsilon_{step}}\right\rceil$, we obtain an error parameter of no less than
$$\left\lceil\frac{\epsilon \left(M_i\right)}{\epsilon_{step}}\right\rceil \epsilon_{step}\ge\epsilon \left(M_i\right),$$
which gives us a capacity of at least $M_i$. In other words, we round up each bucket's error to the nearest product of~$\epsilon_{step}$.

$\varepsilon(M_i)\leq \varepsilon(M)$, because the maximum counter in each bucket is smaller or equal to the total number of packets and $\epsilon$ is increasing with $M$.
Thus, the maximum relative error over the entire counter scale is bounded by $\varepsilon(M)$.
\end{proof}

We conclude that ICE-Buckets has the same maximum relative error as the optimal estimation function.

\subsubsection{Overall Relative Error}

ICE-Buckets also improves the guaranteed overall relative error. In ICE-Buckets, this error is
\begin{equation}
\epsilon_{overall}=\sqrt{\frac{1}{N}\sum_{f}MSRE\left[n_f\right]}
\end{equation}

In Theorem~\ref{second bound}, we show an upper bound for $\epsilon_{overall}$ when  ICE-Buckets is configured optimally. In order to prove it, we need to show that $\epsilon^2(M)$ is concave for $M\ge L-1$ (as can be seen in Figure~\ref{epsilon square as M graph}). We observe that $m(\epsilon)$ is convex and increasing with $\epsilon^2$. The inverse of this function (which is defined on $M\ge L-1$) is therefore concave and increasing. Similarly, $\varepsilon(M)$ is also increasing and concave on $M\ge L-1$.

\begin{figure}[h]
\centering
\includegraphics[width=90mm]{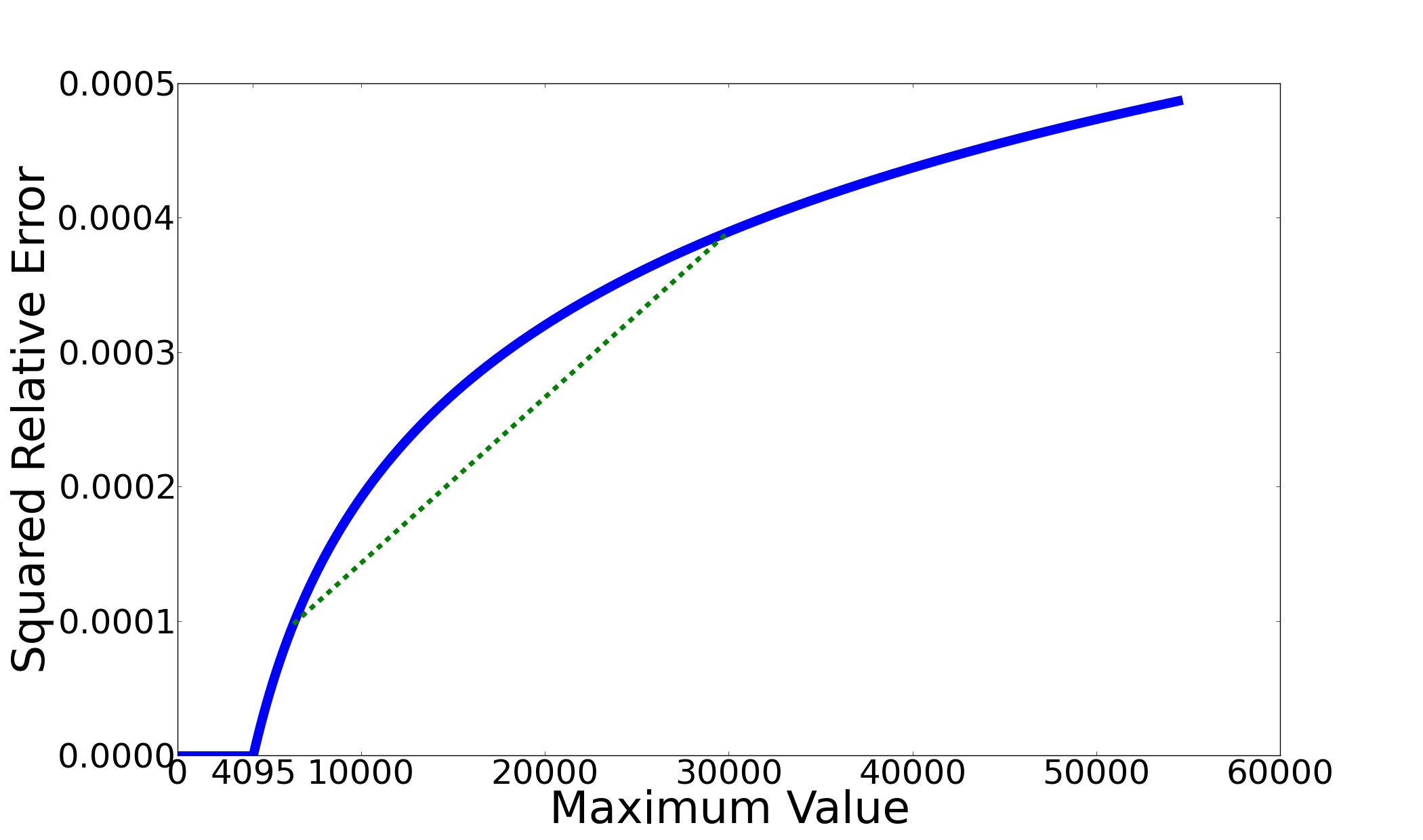}
\caption {$\epsilon^2$ as a function of $M$ for $L=4096$. The dashed line comes to show that the function is concave for $M\ge L-1$.}
\label{epsilon square as M graph}
\end{figure}
\begin{theorem} \label{second bound}
For any counter distribution, ICE-Buckets can be configured to have an overall relative error no greater than 
$$\epsilon \left( {M \over B}+L-1\right)+\frac{\varepsilon\left(M\right)}{E-1}.$$
\end{theorem}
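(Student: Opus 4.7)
The plan is to instantiate the same configuration used in the proof of Theorem~\ref{ICE-Buckets maximum relative error}, then control the $\ell_2$-average of the per-bucket relative errors via Minkowski's inequality and Jensen's inequality applied to the concave function $\varepsilon^2$.

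First I would set $\epsilon_{step}=\varepsilon(M)/(E-1)$ and, for each bucket $i$ with maximum counter value $M_i$, pick the scale $w_i=\lceil \varepsilon(M_i)/\epsilon_{step}\rceil$, exactly as in Theorem~\ref{ICE-Buckets maximum relative error}. By Theorem~\ref{CV theorem}, every flow in bucket $i$ contributes an MSRE of exactly $\epsilon_{w_i}^2$, so
$$\epsilon_{overall}^2=\frac{S}{N}\sum_{i}\epsilon_{w_i}^2=\frac{1}{B}\sum_{i}\epsilon_{w_i}^2,$$
and the ceiling in the definition of $w_i$ gives $\epsilon_{w_i}\le \varepsilon(M_i)+\epsilon_{step}$.

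Next I would apply Minkowski's inequality (the triangle inequality for the $\ell_2$ norm with the uniform probability measure on buckets) to split off the additive $\epsilon_{step}$ term, yielding
$$\epsilon_{overall}\le \sqrt{\tfrac{1}{B}\sum_{i}\varepsilon(M_i)^2}+\epsilon_{step}.$$
It then remains to bound the first term by $\varepsilon(M/B+L-1)$. The key observation, already recorded just above the theorem, is that $\varepsilon^2$ is concave and nondecreasing on $[L-1,\infty)$ (because $m(\cdot)$ is convex and increasing in $\epsilon^2$), but is flat at zero on $[0,L-1]$, which breaks direct use of Jensen. I would therefore introduce the surrogate $g(x):=\varepsilon^2(x+L-1)$, which is concave and nondecreasing on all of $[0,\infty)$, and note that $\varepsilon^2(M_i)\le g(M_i)$ for every $M_i\ge 0$ since $\varepsilon$ is nondecreasing. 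Jensen's inequality on $g$ then gives $\frac{1}{B}\sum_i g(M_i)\le g\!\left(\frac{1}{B}\sum_i M_i\right)$, and since the per-bucket maxima are dominated by the total packet count, $\sum_i M_i\le M$, monotonicity of $g$ yields $g\!\left(\tfrac{1}{B}\sum_i M_i\right)\le g(M/B)=\varepsilon^2(M/B+L-1)$. Substituting back delivers the claimed bound.

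The step I expect to be the main obstacle is this concavity maneuver: because $\varepsilon^2$ is flat on $[0,L-1]$ and concave only above, a naive Jensen bound would give $\varepsilon(M/B)$ and can be violated when some buckets have $M_i<L-1$. The shift $x\mapsto x+L-1$ is precisely what converts the domain-restricted concavity into a global one on $[0,\infty)$, and is the source of the $+L-1$ term inside $\varepsilon(\cdot)$ in the theorem. Everything else is routine manipulation of the definitions from Table~\ref{notationdescription} and Theorem~\ref{CV theorem}.
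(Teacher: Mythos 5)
Your proposal is correct and follows the paper's proof in all of its essential moves: the same configuration ($\epsilon_{step}=\varepsilon(M)/(E-1)$, $w_i=\lceil\varepsilon(M_i)/\epsilon_{step}\rceil$), the same relaxation of the ceiling to an additive $\epsilon_{step}$, the same shift by $L-1$ to repair concavity at the flat part of $\varepsilon$, and Jensen's inequality together with $\sum_i M_i\le M$ to finish. The one place you genuinely diverge is in how the additive $\epsilon_{step}$ is handled: the paper keeps it inside the square root and proves directly that $x\mapsto\left(\varepsilon(x)+\epsilon_{step}\right)^2$ is concave on $x>L-1$ (a short computation that uses concavity of both $\varepsilon$ and $\varepsilon^2$), then applies Jensen once to that composite function; you instead peel off $\epsilon_{step}$ first via Minkowski's inequality and then need Jensen only for the shifted $\varepsilon^2$. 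Your route is marginally cleaner, since it dispenses with the auxiliary concavity claim for the composite expression, while the paper's route avoids invoking Minkowski; both yield exactly the stated bound $\varepsilon\left(\frac{M}{B}+L-1\right)+\frac{\varepsilon(M)}{E-1}$.
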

\begin{proof}
Use the same construction as in the proof of Theorem~\ref{ICE-Buckets maximum relative error}, i.e., $\epsilon_{step} = \frac{\varepsilon(M)}{E-1}$ and $w_i=\left\lceil
\frac{\varepsilon\left(M_i\right)}{\epsilon_{step}}\right\rceil$. With this construction, the overall relative error is no greater than
$$\epsilon_{overall}\le
\sqrt{\frac{1}{N}\sum_{f}MSRE\left[n_f\right]}\le \sqrt{\frac{\sum_{i=0}^{B-1}\Big\lceil\frac{\epsilon \left(M_i\right)}{\epsilon_{step}}\Big\rceil^2 \epsilon_{step}^2}{B}}.$$
Instead of rounding $\varepsilon\left(M_i\right)$ to be a product of $\epsilon_{step}$, we can simply add $\epsilon_{step}$ and keep a bound that is no smaller.
\footnotesize
\begin{align*}
\epsilon_{overall}\le\sqrt{\frac{\sum\limits_{i=0}^{B-1}\Big\lceil\frac{\epsilon \left(M_i\right)}{\epsilon_{step}}\Big\rceil^2 \epsilon_{step}^2}{B}}\le
\sqrt {{{\sum\limits_{i = 0}^{B-1}
\left(\epsilon \left(M_i\right)+\epsilon_{step}\right)^2 } \over {B}}}&\\
\\
\end{align*}
\normalsize

We claim that according to the concaveness of $\epsilon$ and $\epsilon^2$,
$\left(\varepsilon(x)+\epsilon_{step}\right)^2$ is concave on $x>L-1$. This is because:
\begin{align*}
\footnotesize
\alpha&
\left(\varepsilon(x)+\epsilon_{step}\right)^2
+\left(1-\alpha\right)
\left(\varepsilon(y)+\epsilon_{step}\right)^2\\
=& \alpha\epsilon^2(x)
	 	   		+\left(1-\alpha\right)\epsilon^2(y)\\
& +2\epsilon_{step}\left(\alpha\varepsilon(x)+
				\left(1-\alpha\right)\varepsilon(y)\right)
				+\epsilon^2_{step}\\
\le& \epsilon^2\left(\alpha x 	+\left(1-\alpha\right)y\right)
    	 	+2\epsilon_{step}\varepsilon\left(
    		\alpha x+\left(1-\alpha\right)y\right)
    		+\epsilon^2_{step}\\
    	 \le& \left(\varepsilon\left(
    	 		\alpha x+\left(1-\alpha\right)y\right)
    	 		+\epsilon_{step}\right)^2.
\normalsize
\end{align*}

To use concaveness, we must make sure that all of the values are greater or equal to $L-1$. We do so by adding $L-1$ to each symbol. Since $\epsilon$ is an increasing function, the result is no smaller. That is,
$$\epsilon_{overall}\le \sqrt {{{\sum\limits_{i = 0}^{B-1}
\left(\epsilon \left(M_i+L-1\right)+\epsilon_{step}\right)^2 } \over {B}}}.$$
We can now apply Jensen's inequality:

\footnotesize
\begin{align*}
\epsilon_{overall}\le&\sqrt {{{\sum\limits_{i = 0}^{B-1}
	\left(\epsilon \left(M_i+L-1\right)+\epsilon_{step}\right)^2 } \over {B}}}\\
&\le\sqrt{\left(\epsilon \left(\frac{\sum\limits_{i = 0}^{B-1}
\left(M_i+L-1\right)}{B}\right)+\epsilon_{step}\right)^2}\\
&=\epsilon \left(\frac{M}{B}+L-1\right)+\epsilon_{step}.\\
\end{align*}
\normalsize

Setting $\epsilon_{step}=\frac{\varepsilon\left(M\right)}{E-1}$, we get an overall relative error of no more than
$$\epsilon_{overall}\le \epsilon \left( {M \over B}+L-1\right)+\frac{\varepsilon\left(M\right)}{E-1}.$$
\end{proof}
With the correct choice of parameters, this guaranteed overall relative error is far better than the one we can achieve with CEDAR - $\varepsilon(M)$.

This bound demonstrates the effect $E$ and $S$ have on the error. We want the bucket size $S=\frac{N}{B}$ to be as small as possible to restrict the negative impact on other counters, as a counter's size only affects the errors of counters sharing the same bucket.
As for $E$, we want it to be as big as possible to increase the error granularity.
However, decreasing $S$ or increasing $E$ increases the memory overhead.

Note that while CEDAR guarantees optimal estimation in terms of the maximum CV of the hitting time, there is no such claim in terms of the overall relative error.
This enables ICE-Buckets to significantly improve $\epsilon_{overall}$.

\subsection{Dynamic Configuration}
In Theorem \ref{second bound}, we have shown that there is an ICE-Buckets configuration for which the error is low.
We now show how to dynamically configure ICE-Buckets to fit any workload.
This process is composed of local and global upscale operations.

\subsubsection{Local Upscale}
The configuration of the data structure, $\{w_i\}_{i=0}^{B-1}$, is dynamically adjusted to the biggest estimation value in each bucket.  Initially, and bucket scales are set to Zero.
Whenever a symbol $F_{ij}$ approaches $L$, we increment $w_i$ and upscale bucket $i$ to use the parameter $\epsilon_{w_{i}+1}$. This is done by upscaling all of the flows in bucket $i$ using the symbol-upscale procedure described in Algorithm~\ref{symbolupscale}. A pseudo code of the local upscale procedure can be found in Algorithm~\ref{local upscale}.
We note that since the number of counters per bucket $S$ is small, local upscale can be efficiently implemented in hardware.

\begin{algorithm}
\caption{Local Upscale}\label{local upscale}
\begin{algorithmic}[1]
\scriptsize
\Procedure {UpscaleBucket}{i}

\For {$j=0$ to $S-1$}
	\State \Call{SymbolUpscale}{$F_{ij}$,
								$\epsilon_{w_i}$,
								$\epsilon_{w_i+1}$}
\EndFor
\State $w_i \gets w_i+1$
\EndProcedure
\normalsize
\end{algorithmic}
\end{algorithm}

\subsubsection{Global Upscale}
When a counter in a bucket with the maximum scale index ($E-1$) approaches its maximum value ($L-1$), we initiate a global upscale procedure  to prevent overflow. The procedure doubles the size of $\epsilon_{step}$.  Buckets with odd $w_i$s perform a local upscale. Then, every bucket $i$ updates its scale index to $w_i/2$. Pseudo code is given in Algorithm~\ref{bigupscale}.

\begin{algorithm}
\caption{Global Upscale}\label{bigupscale}
\begin{algorithmic}[1]
\scriptsize
\Procedure{GlobalUpscale}{}

\For {$u=0$ to $B-1$}
	\If {$w_u\mod2=1$}
		\State \Call{UpscaleBucket}{$u$}
	\EndIf
	\State $w_u\gets \frac{w_u}{2}$
\EndFor
\State $\epsilon_{step}\gets 2\epsilon_{step}$
\EndProcedure
\normalsize
\end{algorithmic}
\end{algorithm}

Table~\ref{tbl:upscale} demonstrates this process. In this case, we use eight different possible scales ($E=8$) and  $\epsilon_{step}=0.1\%$ before the global upscale. Buckets with odd $w_i$ perform local upscale and their scale is incremented accordingly, so that all buckets have even scale parameters and we can safely half their scales to match the new $\epsilon_{step}$. Notice that at the moment of upscale, $\epsilon_{max}$ is increased only by $\epsilon_{step}$. In this case, no bucket after upscale has a scale parameter larger than~4. In general, no bucket has a scale parameter larger than~$\frac{E}{2}$ immediately after global upscale. 
\begin{table}[htp]
{
\begin{center}
\footnotesize
\begin{tabular}{|c|c|c|c|c|}

\hline
Old $w$ & Old $\epsilon_{w}$  & New $w$ & New $\epsilon_w$ & Requires Upscale?\tabularnewline
\hline
\hline
0 & 0.0\% & 0 & 0.0\% & No\tabularnewline
\hline
1 & 0.1\% & 1 & 0.2\% & Yes\tabularnewline
\hline
2 & 0.2\% & 1 & 0.2\% & No\tabularnewline
\hline
3 & 0.3\% & 2 & 0.4\% & Yes\tabularnewline
\hline
4 & 0.4\% & 2 & 0.4\% & No\tabularnewline
\hline
5 & 0.5\% & 3 & 0.6\% & Yes\tabularnewline
\hline
6 & 0.6\% & 3 & 0.6\% & No\tabularnewline
\hline
7 & 0.7\% & 4 & 0.8\% & Yes\tabularnewline
\hline
--- & --- & 5 & 1.0\% & ---\tabularnewline
\hline
--- & --- & 6 & 1.2\% & ---\tabularnewline
\hline
--- & --- & 7 & 1.4\% & ---\tabularnewline
\hline
\end{tabular}
\normalsize
\end{center}
{\caption{Global upscale example ($E = 8$); $\epsilon_{step}$ is updated from 0.1\% to 0.2\%.} \label{tbl:upscale}}}
\end{table}

Global upscale may be difficult to implement in hardware.
A method to upscale the entire counter array while continuously counting new packet arrivals is described in \cite{CEDAR}.
This method also applies to ICE-Buckets' global upscale.
In our case, global upscale can be completely avoided by setting the maximal error to $\varepsilon(M)$.

\subsection{Parameter Choice}
We now describe the process of choosing the parameters to minimize the upper bound from Theorem~\ref{ICE-Buckets maximum relative error} and then Theorem~\ref{second bound}. In the standard scenario, we have limited space for our data structure of $T$ bits. We usually have an upper bound for $M$, e.g., by multiplying the maximum supported traffic rate by the maximum measurement time. If $M$ is still unknown, we can use the maximum integer we can represent. $N$ could also be given, as the maximum number of flows the networking device supports. If $N$ grows during run-time and we have enough space, we can always allocate more counters on the fly. We now choose $L$, $B$, $S$ and $E$. To minimize $\epsilon_{max}$, we should allocate as many bits as possible for every counter. We therefore allocate $\log_2 L = \lfloor\frac{T}{N}\rfloor$ bits per counter. We are left with $T\mod N$ bits for the scale parameters (if no memory is left we can use a single bucket). Next, we note that there is no point in choosing $E$ to be larger than $M$. Every upscale should increase the counting capacity by at-least one and therefore $M$ upscales should be always sufficient to achieve the maximum counting capacity. To find the optimal $E$, we can iterate over $\log_2 E$, which should be an integer number as it represents the number of bits we give the scale parameter. For each choice of $E$, we calculate the number of buckets we can afford: $B=\lfloor\frac{T\mod N}{\log_2 E}\rfloor$.  Given all the parameters, we can calculate the upper bound from Theorem~\ref{second bound}. The upper bound requires the computation of the function $\varepsilon(M)$. This function can be computed through binary search because the opposite function is increasing and can be easily computed. By iterating over the possible $E$ values, we can find the $E$ that gives the smallest upper bound. After finding $E$, we can calculate $B$ and $S=\frac{N}{B}$.

\begin{figure}[htp]
\centering
\includegraphics[width=\columnwidth]{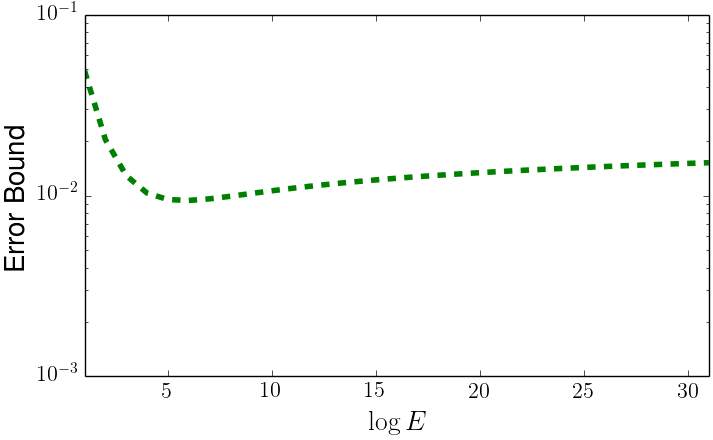}
\caption{The error bound from Theorem~\ref{second bound} for $N,M$ of NZ09, different choices of $E$ and  an average of 12.5 bits per counter.}
\label{fig:parameter choice}
\end{figure}

For example, consider the trace NZ09, which will be presented in Section~\ref{Simulation Results}. The trace has $N=32,737,760$ flows. In this example, we allocate $12.5$ bits for each counter. $M$ is unknown in advance so we choose the maximum int $2^{32}-1$. After allocating the maximum of $12$ bits per symbol, we are left with $T \mod N=16,368,880$ bits. We now try all possible options of $E$, and for each option we calculate $B$ and the upper bound from Theorem~\ref{second bound}. Figure~\ref{fig:parameter choice} depicts the computed bounds for every choice of $E$. We can see that for $E$s that are too small, the granularity of the error $\epsilon_{step}$ is too crude and the result is a high error. The optimal $E$ for the upper bound is $2^6$, and larger $E$s give higher errors because they require allocation of bigger buckets.

\begin{figure*}[htp]
	\subfigure[CHI08 with 12-bit symbols]{\includegraphics[width=\columnwidth]
		{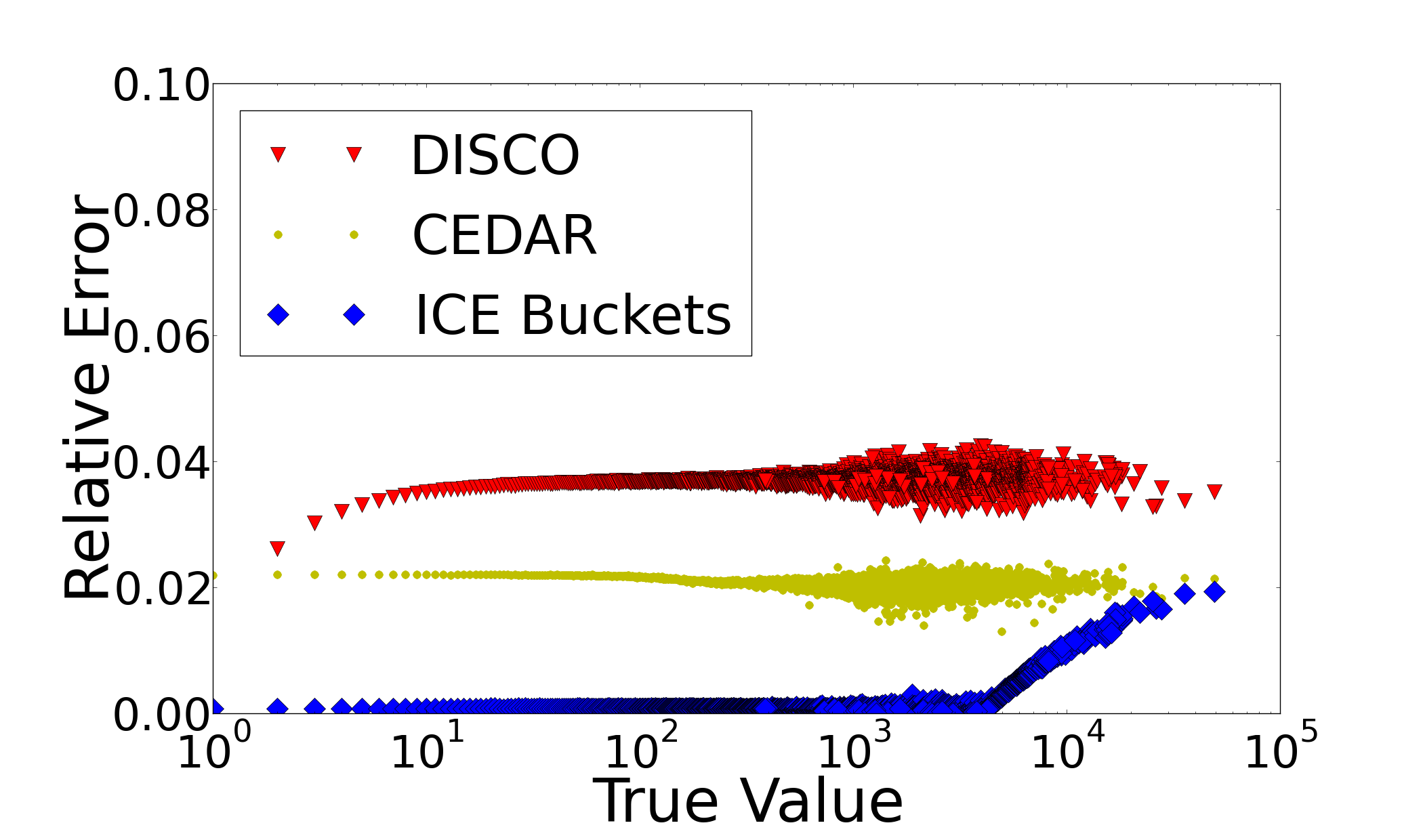}}
	\subfigure[CHI08 with 8-bit symbols ]{\includegraphics[width=\columnwidth]
			{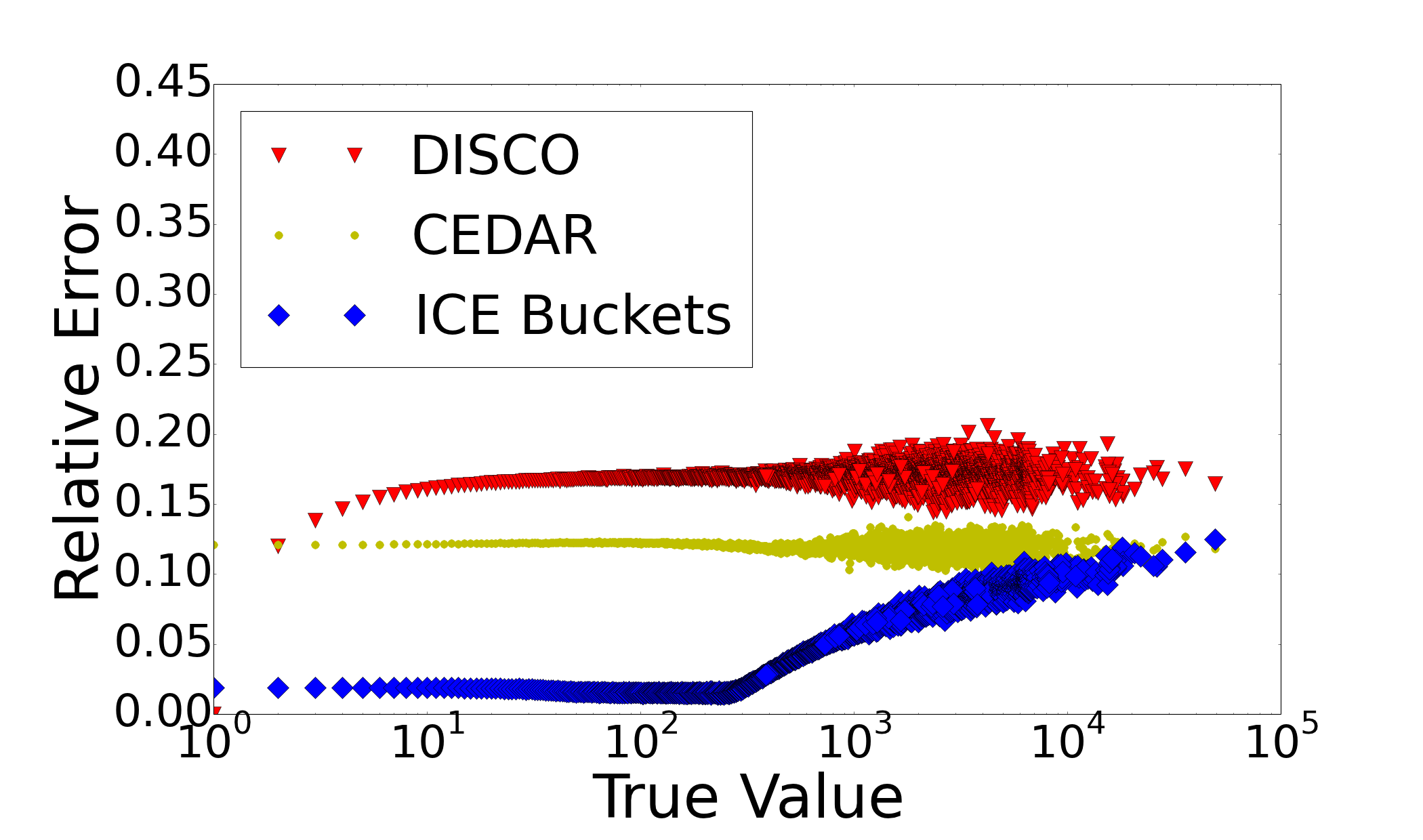}}
	\subfigure[NZ09 with 12-bit symbols]{\includegraphics[width=\columnwidth]
		{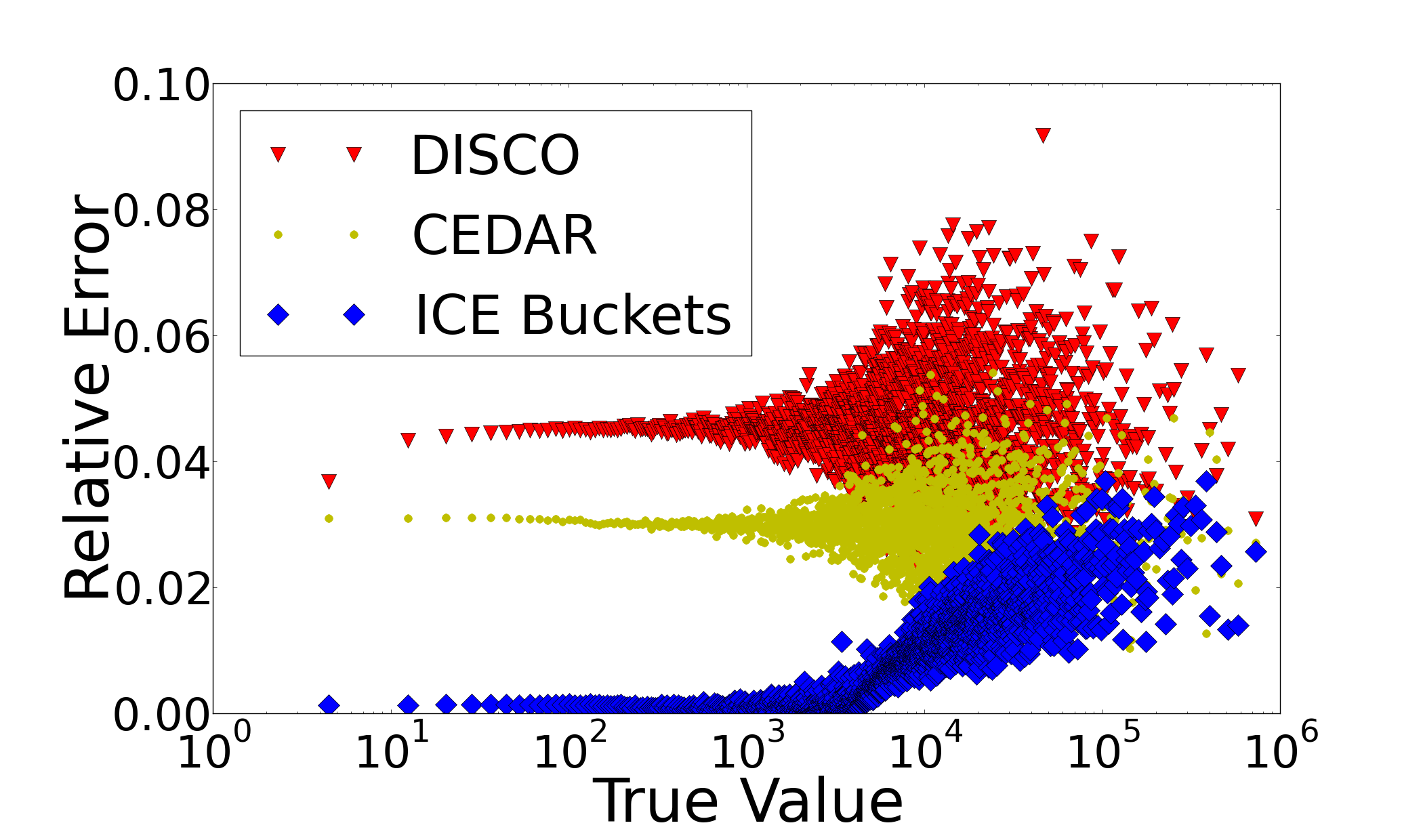}}
	\subfigure[NZ09 with 8-bit symbols]{\includegraphics[width=\columnwidth]
		{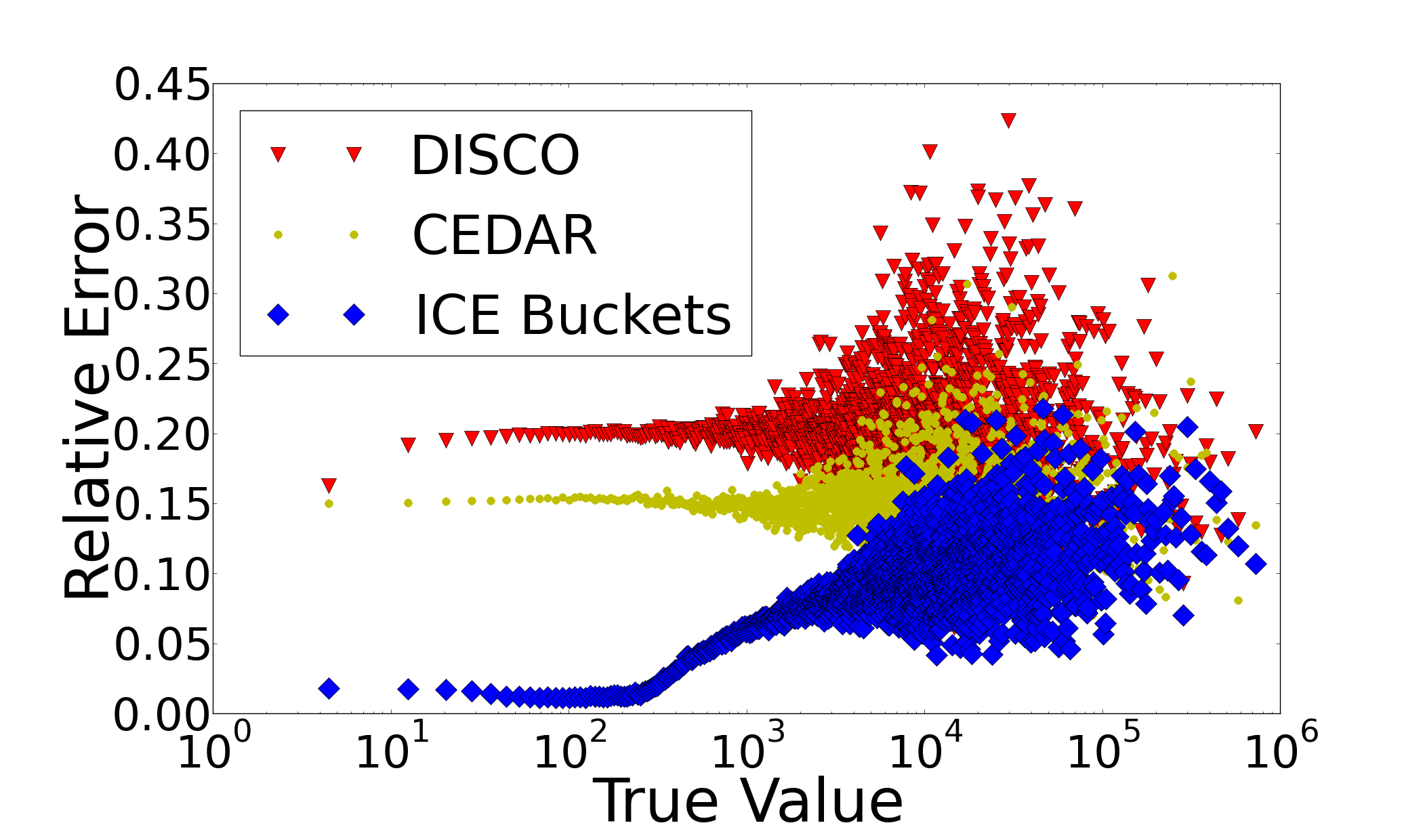}}
	\subfigure[CHI15 with 12-bit symbols]{\includegraphics[width=\columnwidth]
		{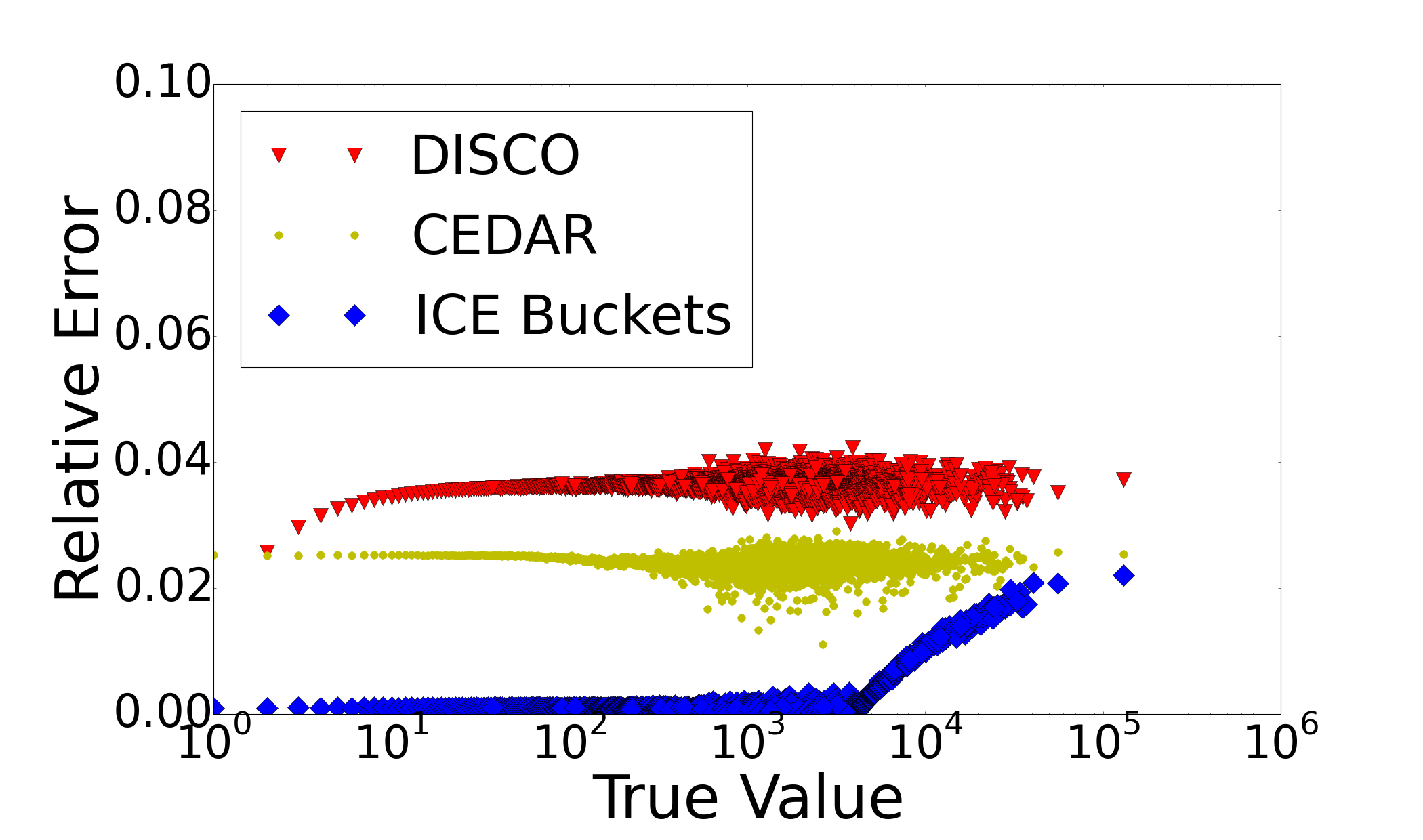}}
	\subfigure[CHI15 with 8-bit symbols]{\includegraphics[width=\columnwidth]
		{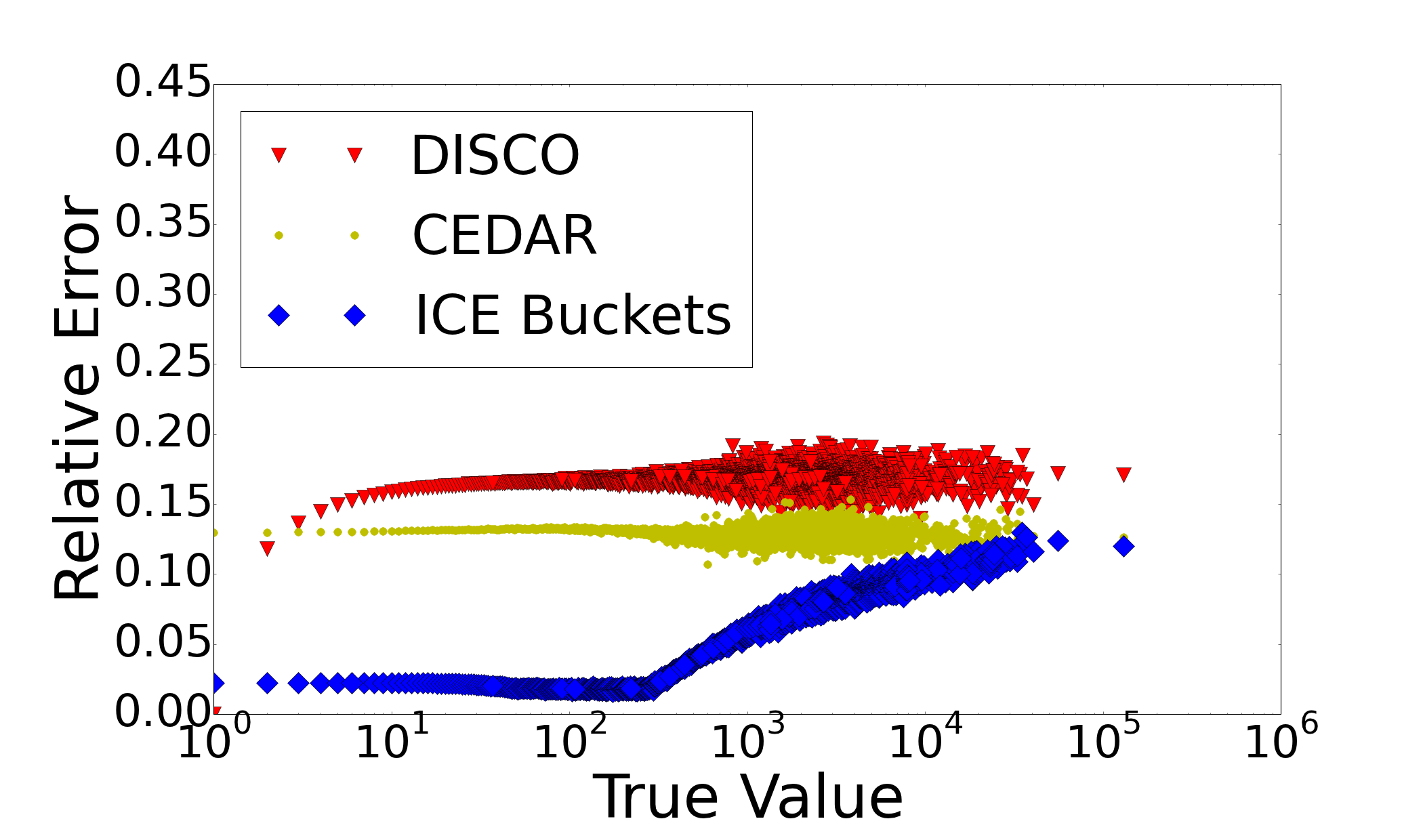}}
	\caption{Comparison of the relative error per value on different traces using 8 and 12-bit symbols}
	\label{errorPerValue}
\end{figure*}

\section{Simulation Results}
\label{Simulation Results}

We evaluated ICE-Buckets with five different Internet packet traces. 
The first trace (NZ09) consists of twenty four hours of Internet traffic collected from an unnamed New Zealand ISP on Jan. 6th 2009~\cite{NetworkingTrace2}.
It is a relatively large trace, containing almost a billion packets and over 32 Million flows.
We also used data from two Equinix data-centers in the USA, which are connected to backbone links of Tier1 ISPs. Both use per flow data for load balancing. One is connected to a link between Chicago, IL and Seattle, WA. We used three traces from this dataset, CHI08 from 2008~\cite{NetworkingTrace1}, CHI14 from 2014~\cite{NetworkingTrace4} and CHI15 from 2015~\cite{NetworkingTrace5}. CHI08 was previously used to evaluate CEDAR in~\cite{CEDAR}.
Trace (SJ13) was taken in 2013 from an Internet data collection monitor that is connected to a link between San Jose and Los Angeles, CA~\cite{NetworkingTrace3}.

We compare ICE-Buckets to two state of the art counter estimation algorithms - DISCO~\cite{DISCOJournal} and CEDAR~\cite{CEDAR}.
In order to measure different memory constraints, we tested each algorithm with both 8.5 and 12.5 bits on average per counter.
We use 8 and 12 bits (correspondingly) for the symbols.
The remainder is used for the scale parameters in ICE-Buckets and for storing the estimation value array in CEDAR.
Previous works were evaluated with similar symbol lengths.
DISCO does not have an upscaling scheme.
Therefore we configured it according to the maximal expected number of packets ($M$) that is different for each trace, as specified in Table~\ref{parameters}.
Per-trace statistics and configurations are given in Table~\ref{parameters}.

\footnotesize
\begin{table*}[htb]
{
\begin{center}
\begin{tabular}{|c|c|c|c|c|c|}
\hline
Trace & Flows (\textit{N}) & Packets & \textit{M} & \textit{S}      & \textit{E} \tabularnewline
      &           &         &   & 8b / 12b & 8b / 12b
\tabularnewline
\hline
\hline
CHI08 & 1,420,318 & 26,750,712 & 26,750,712 & 10 / 14 & 32 / 128\tabularnewline
\hline
CHI14 & 1,213,614 & 34,721,808 & 34,721,808 & 10 / 12 & 32 / 64\tabularnewline
\hline
SJ13 & 3,071,187 & 20,803,060 & 20,803,060 & 12 / 14 & 64 / 128\tabularnewline
\hline
NZ09 & 32,737,760 & 891,023,765 & $2^{32}-1$ & 12 / 12 & 64 / 64\tabularnewline
\hline
CHI15 & 683,708 & 18,774,214 & 18,774,214 & 10 / 12 & 32 / 64\tabularnewline
\hline
\end{tabular}
\end{center}
}

{\caption{Documentation of trace characteristics and ICE-Buckets configurations used in experiments for 8/12 bit symbols.} \label{parameters}}
\end{table*}

\begin{table*}[tb]
{
\begin{center}
\begin{tabular}{|c|c|c|c|c|c|c|c|c|c|c|}
\hline
Trace & \multicolumn{2}{c|}{CHI08} & \multicolumn{2}{c|}{CHI14} & \multicolumn{2}{c|}{SJ13} & \multicolumn{2}{c|}{NZ09} & \multicolumn{2}{c|}{CHI15}\tabularnewline
Bits-Per-Symbol & \multicolumn{1}{c}{8} & 12 & \multicolumn{1}{c}{8} & 12 & \multicolumn{1}{c}{8} & 12 & \multicolumn{1}{c}{8} & 12 & \multicolumn{1}{c}{8} & 12\tabularnewline
\hline
\hline
CEDAR (upper bound) & 16.93\% & 3.70\% & 17.10\% & 3.75\% & 16.76\% & 3.65\% & 19.61\% & 4.51\% & 16.70\% & 3.64\%\tabularnewline
\hline
ICE-Buckets (upper bound) & 5.02\% & 0.42\% & 5.71\% & 0.50\% & 3.49\% & 0.26\% & 8.22\% & 0.94\% & 5.63\% & 0.49\%\tabularnewline
\hline
\hline
DISCO (actual) & 10.34\% & 2.24\% & 10.16\% & 2.24\% & 7.82\% & 1.71\% & 14.24\% & 3.21\% & 11.58\% & 2.51\%\tabularnewline
\hline
CEDAR (actual) & 12.19\% & 2.17\% & 12.40\% & 2.38\% & 13.05\% & 2.60\% & 15.01\% & 3.09\% & 12.96\% & 2.51\%\tabularnewline
\hline
ICE-Buckets (no global upscale) & 1.70\% & 0.10\% & 2.14\% & 0.14\% & 1.00\% & 0.03\% & 1.78\% & 0.14\% & 4.29\% & 0.21\% \tabularnewline
\hline
ICE-Buckets (actual) & 1.50\% & 0.06\% & 1.96\% & 0.16\% & 1.01\% & 0.03\% & 1.81\% & 0.13\% & 2.02\% & 0.11\%\tabularnewline
\hline
\end{tabular}
\end{center}
}

{\caption{Overall relative error bounds and actual overall relative error of different algorithms on various traces}
\label{comparison}}
\end{table*}

\normalsize

Table~\ref{comparison} presents the overall relative error of ICE-Buckets and the alternatives for the tested traces.
We also present the upper bounds on the error of ICE-Buckets and CEDAR. As can be observed, for real datasets, ICE-Buckets' error is much lower than this bound, since the majority of flows are small. For CHI08, ICE-Buckets achieves an overall relative error that is over 57 times smaller than that of CEDAR.
Notice that for all traces, ICE-Buckets' overall relative error with 8-bit symbols is lower than that of the alternatives, even with 12-bit symbols.
Note that in our case, DISCO is $100\%$ accurate when the value is 1, and is slightly less accurate than CEDAR for all other values.
All in all, this results in an overall relative error similar to that of CEDAR.

We also experimented with a version of ICE-Buckets that does not use global upscale, which should be easier to implement in hardware.
To do so, we pre-configured $\epsilon_{step}$ to ensure local upscales are sufficient to count $2^{32}$-1 packets in any bucket.
Note that for most traces the error in this case is very similar to that of ICE-Buckets with upscale. We therefore recommend to implement ICE-Buckets without global upscale when the total number of packets can be bounded in advance.

To explain the cause of ICE-Buckets' substantial error reduction, we show in
Figure~\ref{errorPerValue} the relative error as a function of the real counter value. The relative error was computed from 256 runs of each algorithm on CHI08 and CHI15 and one run on NZ09. Note that for most counter values, the relative error of ICE-Buckets is lowest, followed by CEDAR, and then DISCO. ICE-Buckets achieves an error close to zero for counters smaller than $L$ because an accurate counter of $log_2L$~bits suffices to represent those values. Unfortunately, this error cannot always be zero, as some of these counters share buckets with larger counters. As the counter scale grows, the estimation error increases, until eventually, the largest counter is estimated with $\epsilon_{max}$. In contrast, CEDAR estimates all of the counters with relative error $\epsilon_{max}$.

\begin{figure*}[htp]
	\subfigure[CHI08]{\includegraphics[width=90mm]{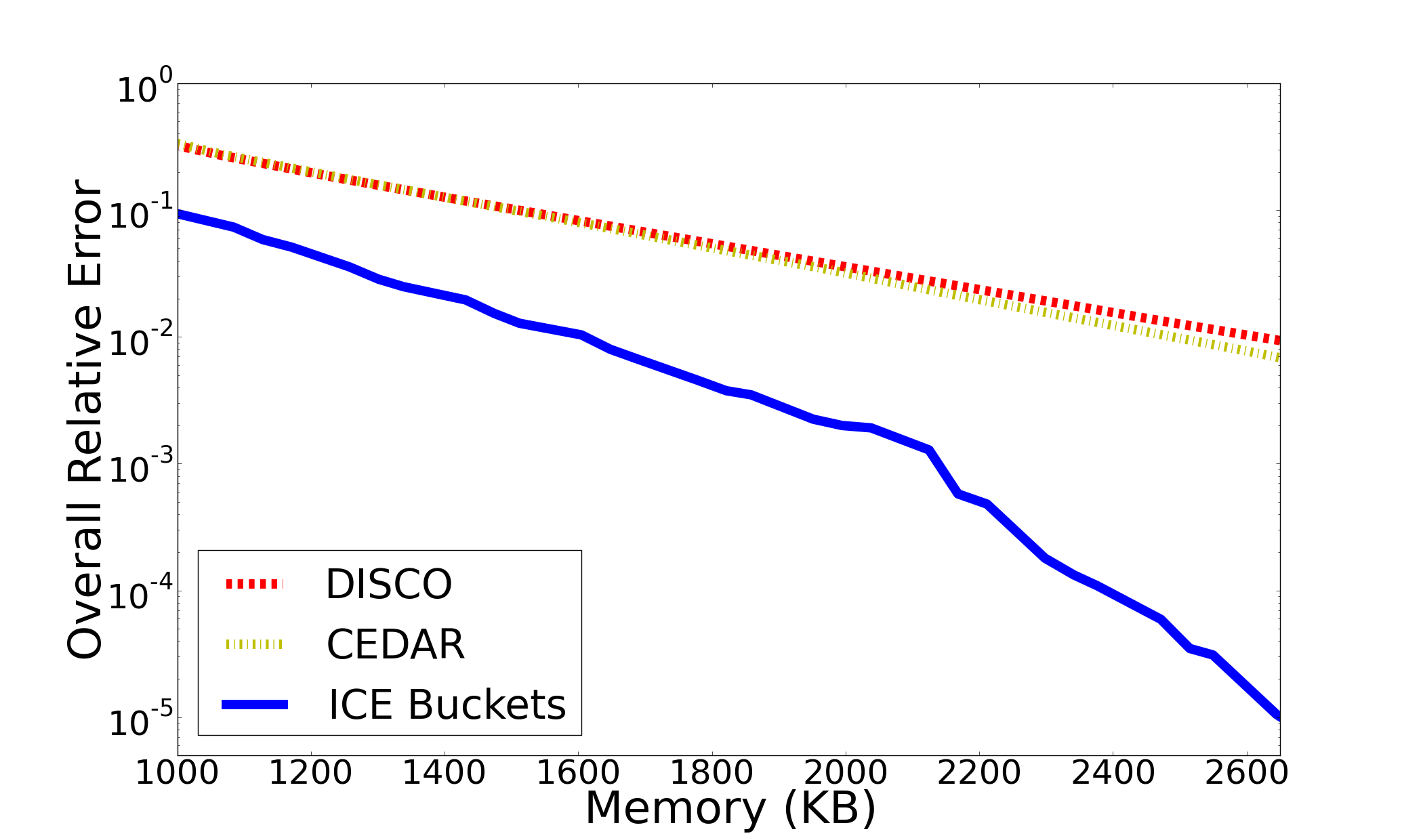}}
	\subfigure[CHI15]{\includegraphics[width=90mm]{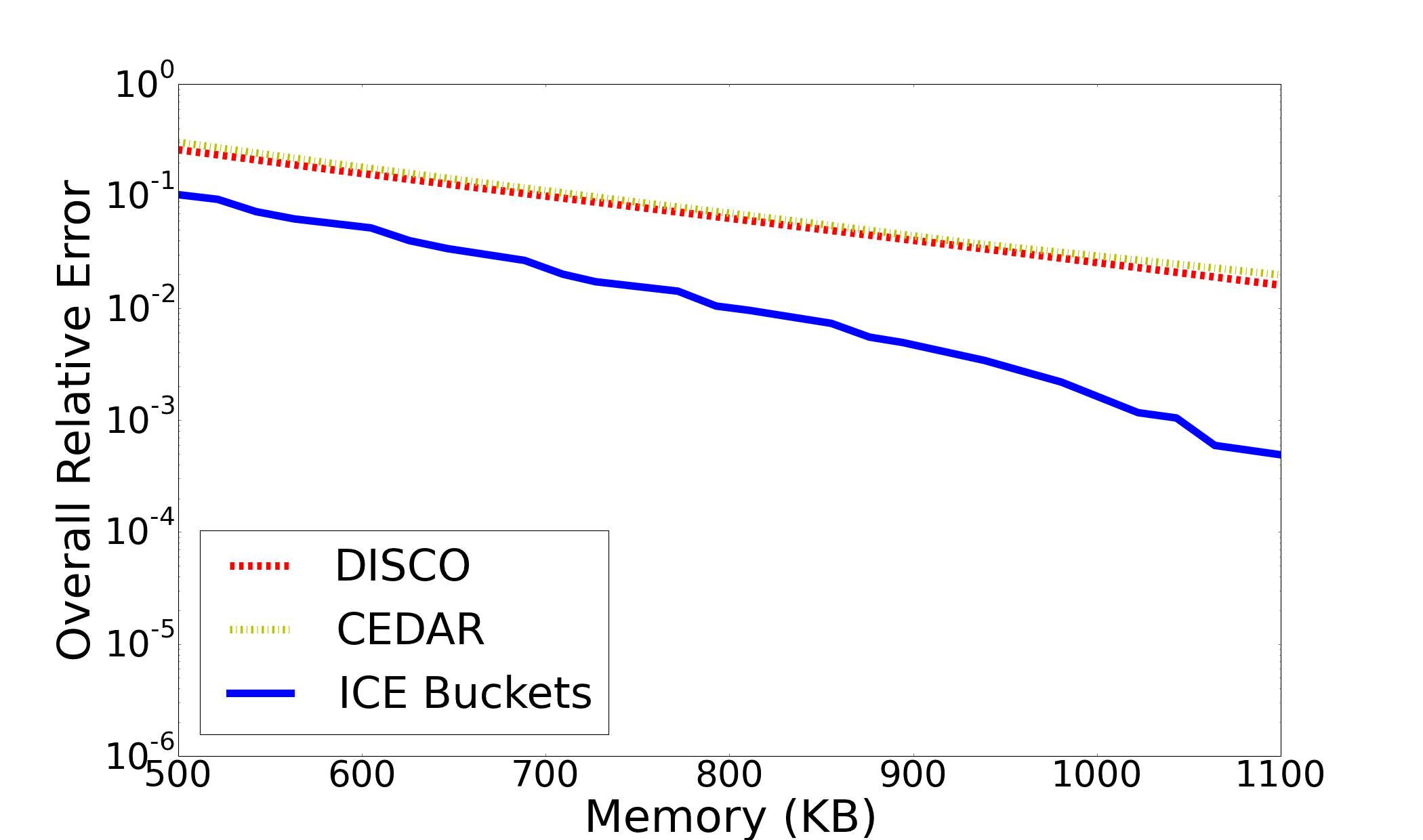}}
\caption{A comparative evaluation of the accuracy as a function of the allocated memory space; overheads are taken into account!}
\label{error_per_memory}
\end{figure*}

Figure~\ref{error_per_memory} illustrates the accuracy of different algorithms for CHI08 and CHI15 under varying memory constraints. Overheads of all methods are taken into account and the maximal symbol size is used for each method. ICE-Buckets uses different configurations with overheads that range between $1 \over 4$ and $3 \over 4$ bits per counter. The figure shows that DISCO and CEDAR provide similar space-accuracy trade-offs as mentioned in Section~\ref{sec:relative error}. Under all of the simulated memory constraints, ICE-Buckets is more accurate than both CEDAR and DISCO, and the difference between ICE-Buckets and the other algorithms grows with the memory. We explain this by noting that as the number of bits per symbol (L) grows, more counters can be estimated with zero error.

Figure~\ref{error_per_time} describes the overall relative error of CEDAR and ICE-Buckets throughout the NZ09 trace's progress. To adapt to the growing counter scale, both ICE-Buckets and CEDAR use an upscale mechanism that gradually increases the error. Note that the overall relative error of ICE-Buckets is almost constant throughout an entire day of real Internet traffic. In addition, CEDAR's multiple global upscales are clearly visible in the figure. In contrast, ICE-Buckets' upscales are mostly local and cause a smoother increase in the relative error.

\section{Conclusion}
\label{Conclusion}

In this work, we have introduced ICE-Buckets, a novel counter estimation data structure that minimizes the relative error.
ICE-Buckets uses the optimal estimation function with a scale that is optimized independently for each bucket.

We first described an explicit representation of this function, which was previously known only in recursive form. We extended its analysis and showed a method to measure the effect of upscale operations on the relative error. This function is used in ICE-Buckets to minimize the error in each bucket.

ICE-Buckets is dynamically configured to adapt to the growing counters. For practical deployments, it can be implemented without global operations while providing similar accuracy. 

We proved an upper bound to ICE-Buckets' overall relative error, which is significantly smaller than that of previous estimation algorithms. In particular, we demonstrated a reduction of up to 14 times in this upper bound when applied to traffic characteristics of real workloads. ICE-Buckets also achieves the same maximum relative error as the optimal function.

Additionally, we extensively evaluated ICE-Buckets with four Internet packet traces
and demonstrated a reduction of up to 57 times in overall error. ICE-Buckets achieves an improvement in accuracy even when it is given considerably less space than the alternatives.
Finally, we have shown that ICE-Buckets is significantly more accurate than the leading alternatives for a wide range of memory constraints.

In this work, we explained how to perform decrements and downscaling.
Yet, doing so, greatly complicates the analysis.
Analyzing their impact is left for future work.

As mentioned before, another interesting topic for future work is combining shared counters schemes like CMS~\cite{CountMinSketch}, multi-stage filters~\cite{HuffmanBF}, SBF~\cite{SpectralBloom}, as well as TinyTable~\cite{TinyTable} with estimators.
Since ICE-buckets offers small counters with low error, replacing the counters in the above with estimators could potentially improve their space to accuracy ratio.
Another promising aspect of the above is that estimators can return the estimated value in $O(1)$ time, thereby maintaining the access efficiency of such combined schemes.

\begin{figure}[htp]
	\centering
	\includegraphics[width=90mm]{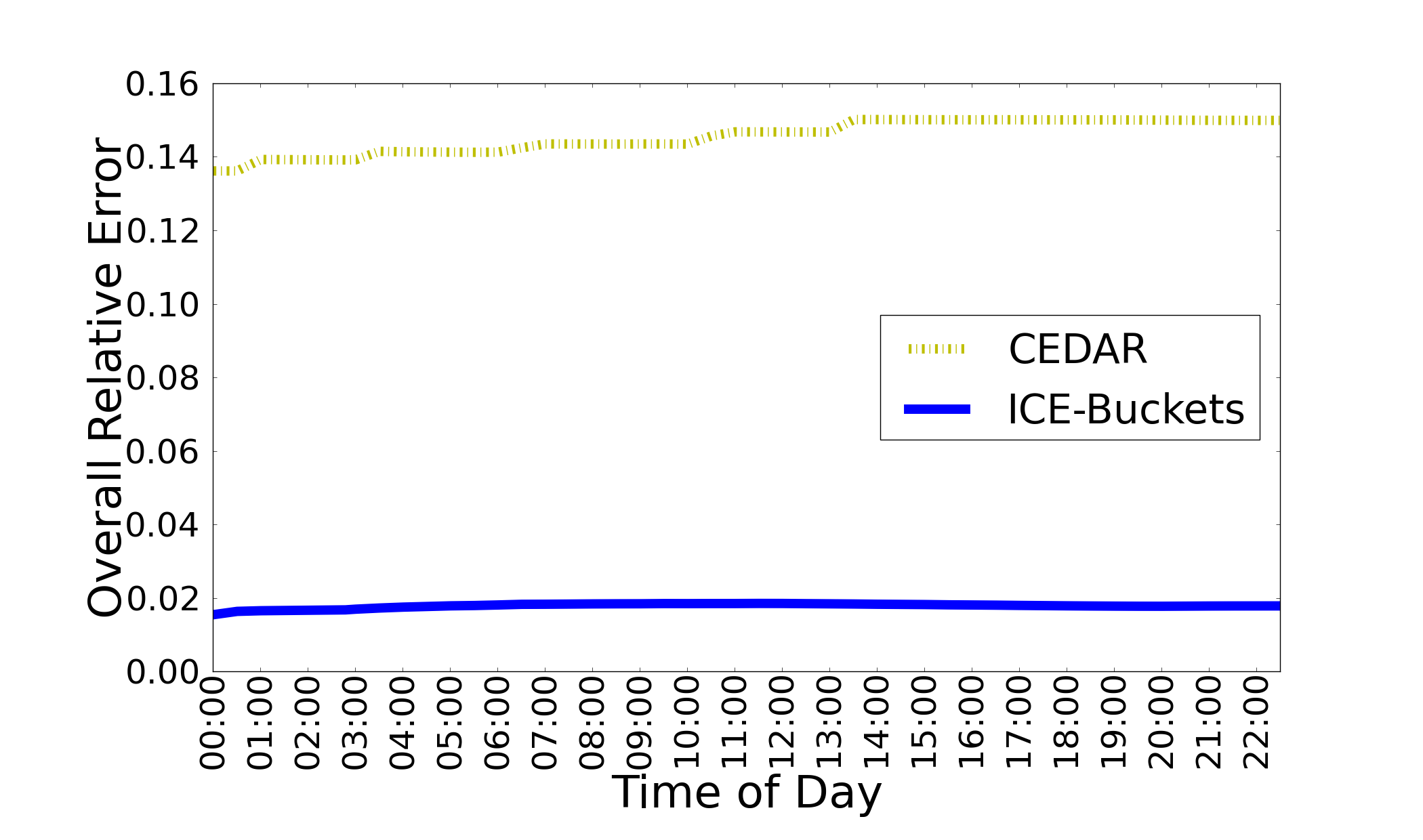}
	\caption{Average relative error through trace progress with 8-bit symbols as simulated on NZ09}
	\label{error_per_time}
\end{figure}

\paragraph*{Acknowledgements} We would like to thank Isaac Keslassy, Olivier Marin and Erez Tsidon for their insights and helpful advice. 
This work was partially funded by the Israeli Ministry of Science and Technology grant 3-10886 and the Technion HPI center.
{
	\bibliographystyle{IEEEtran}
	\bibliography{refs}
}
\end{document}